\newtheorem{theorem}{Theorem}
\newtheorem{cor}{Corollary}
\theoremstyle{definition}
\newcommand{\ma}[1]{\bm{ #1 }} 
\newcommand{\compl}{\mathbb{C}}
\newcommand{\real}{\mathbb{R}}
\newcommand{\comment}[1]{}
\newcommand{\realof}[1]{{\rm Re}\left\{ #1 \right\}}
\newcommand{\traceof}[1]{{\rm trace}\left\{ #1 \right\}}
\newcommand{\fronorm}[1]{\left\|#1\right\|_{\rm F}}
\newcommand{\trans}{{\rm T}}
\newcommand{\herm}{{\rm H}}
\newcommand{\expvof}[1]{{\mathbb{E}}\hspace{0mm}\left\{ #1 \right\}}
\newcommand{\expvofsub}[2]{{\mathbb{E}}_{#1}\hspace{0mm}\left\{ #2 \right\}}
\title{Design and Analysis of Compressive Antenna Arrays for Direction of Arrival Estimation}
\author{Mohamed Ibrahim$^*$, 
		    Venkatesh Ramireddy,
		    Anastasia  Lavrenko, 
		    Jonas K\"onig, 
		    Florian R\"omer,
		    Markus Landmann,
		    Marcus Grossmann, 
		    Giovanni Del Galdo,
		    and Reiner S. Thom\"a
%
%
%
%
%
%
\thanks{
The authors 
M.~Ibrahim, V.~Ramireddy,  A.~Lavrenko, J.~K\"onig, F.~R\"omer, G.~Del~Galdo and R.~S.~Thom\"a are with 
Ilmenau University of Technology,
P.~O.~Box 10~05~65, 98684 Ilmenau, Germany.
The authors G.~Del~Galdo, R.~S.~Thom\"a,
M.~Landmann, and M.~Grossmann are with
Fraunhofer Institute for Integrated Circuits IIS,
Helmholtzplatz~2, 98683 Ilmenau, Germany. Parts of this work have been presented as conference papers at the 40th International Conference on Acoustics, Speech and Signal Processing (ICASSP),  Brisbane, Australia in April 2015 (Theorem 1 and Corollary 1) and at the 23rd European Signal Processing Conference (EUSIPCO), Nice, France in September 2015 (adaptive design approach from Section \ref{sec:adptv_dsn}).}
\thanks{$*$ corresponding author}
}
\begin{document}

\maketitle

\noindent
{\bf {\bf \slshape Abstract \symbol{124}} 
In this paper we investigate the design of compressive antenna arrays for direction of arrival (DOA) estimation that aim to provide a larger aperture with a reduced hardware complexity by a linear combination of the antenna outputs to a lower number of receiver channels. We present a basic receiver architecture of such a compressive array and introduce a generic system  model that includes different options for the hardware implementation. We then discuss the design of the analog combining network that performs the receiver channel reduction, and propose two design approaches. The first approach is based on the spatial
correlation function which is a low-complexity scheme that in certain cases admits
a closed-form solution. The second approach is based
on minimizing the Cram\'{e}r-Rao Bound (CRB) with the constraint to limit the probability of false detection of paths to 
a pre-specified level. Our numerical simulations demonstrate the superiority of the proposed optimized compressive arrays compared to the sparse arrays of the same complexity and to compressive arrays with randomly chosen combining kernels. 
}
\medskip

\noindent
{\bf Keywords}: {\it Compressive Sensing, DOA Estimation, Measurement Design}

\IEEEpeerreviewmaketitle

\section{Introduction} \label{sec_intro}



 Direction of arrival (DOA) estimation has been an active field of research for many decades \cite{KV:96}. In general, DOA estimation addresses the problem of locating sources which are radiating energy that is received by an array of sensors with known spatial positions \cite{HLVT:02}. Estimated DOAs are used in various applications like localization of transmitting sources, for direction finding \cite{VCK:95,CYH:02}, massive MIMO and 5G Networks \cite{CYH:02}, channel sounding and modeling \cite{RHST:00,TLR:04}, tracking and surveillance in radar \cite{BP:01}, and many others. 
A major goal in research on DOA estimation is to develop approaches that allow to minimize hardware complexity in terms of receiver costs and power consumption, while providing a desired level of estimation accuracy and robustness in the presence of multiple sources and/or multiple paths. Furthermore, the developed methods shall be applicable in practical applications with realistic antenna arrays whose characteristics often significantly vary from commonly considered ideal models \cite{LT:09}.

 In the last few decades, research on direction of arrival (DOA) estimation using array processing has largely focused on uniform arrays (e.g., linear and circular) \cite{HLVT:02} for which many efficient parameter estimation algorithms have been developed. Some well-known examples are ESPRIT \cite{RK:89}, MUSIC \cite{S:86} and Maximum Likelihood (ML)-based methods \cite{SOVM:96,TLR:04}. Note that ML-based methods are particularly suitable for realistic, non-ideal antenna arrays since they can easily account for the full set of parameters of the antenna array (e.g., antenna polarization, non-ideal antennas and array geometries, etc.). However, to perform well, the algorithms require to fulfill certain conditions on the sampling of the wavefront of the incident waves in the spatial domain.
Namely, the distance between adjacent sensors should be less than or equal to half a wavelength of the impinging planar wavefronts, otherwise it leads to grating lobes (sidelobes) in the spatial correlation function which correspond to  near ambiguities in the array manifold. At the same time,  to achieve DOA estimation with a high resolution, the receiving arrays should have a relatively large aperture \cite{HLVT:02}. This implies that arrays with a large number of antennas are needed to obtain a high resolution, which is not always feasible.


This limitation has triggered the development of arrays with inter-element spacing larger than half the impinging wave's wavelength combined with specific constraints to control the ambiguity problem in DOA estimation. Such arrays are usually called sparse arrays. In \cite{L:64}, it was proposed to constitute a non-uniform sparse array with elements spaced at random positions. However, using such random arrays will often result in an
unpredictable behavior of the sidelobes in the array's spatial correlation function. As a result, it is necessary to optimize the positions of the antenna elements in order to achieve a desired performance. An early approach towards that goal was the Minimum Redundancy Linear Array (MRLA) \cite{M:68}, where it is proposed to place the antenna elements such that the number of pairs of antennas that have the same spatial correlation properties are as small as possible. However, it is very difficult to construct an MRLA when the number of elements is relatively large \cite{MHC:09}. Some non-linear optimization methods like genetic algorithms \cite{H:94} and simulated annealing \cite{TM:99} have been regularly used to find optimum configurations for these sparse arrays. In more recent works, V-shaped arrays \cite{GA:09}, Co-Prime arrays \cite{VP11}, and Nested arrays \cite{PV10} have been proposed to extend the effective array aperture. 

Recently, compressed sensing (CS) \cite{CT:06,CRT:06,D:06} has been widely suggested for applications that exhibit sparsity in time, frequency or space to reduce the sampling efforts. The application of sparse recovery to DOA estimation has been considered for applications like localization of transmitting sources \cite{CGMC:08}, channel modeling \cite{FVT:09}, tracking and surveillance in radar \cite{E:10}, and many others.
It is highlighted in \cite{MCW:05} that if the electromagnetic field is modeled as a superposition of a few plane waves, the DOA estimation problem can be expressed as a sparse recovery problem. The main focus is to use the sparse recovery algorithms that became popular
in the CS field for the DOA estimation problem as an alternative
to existing parameter estimation algorithms~\cite{GCM:12,G:10,SBL:11,MZ:06}. 

Compressed sensing has also been suggested to be applied in the spatial domain (e.g., array processing and radar) with the main goal to reduce the complexity of the measurement process by using fewer RF chains and storing less measured data without the loss of any significant information. Hence, the idea of sparse random arrays with increased aperture size has been recently revisited and proposed to perform spatial compressed sensing \cite{HWKH:10,RHE:12,SAL:12,HL:14}. 


An alternative approach that attempts to apply CS to the acquisition of the RF signals that are used for DOA estimation has recently been proposed in \cite{WLP:09,GZS:11}. In particular, the CS paradigm can be applied in the spatial domain by employing $N$ antenna elements that are combined using an analog combining network to obtain a smaller number of $M < N$ receiver channels.
Since only $M$ channels need to be sampled and digitized, the hardware complexity remains comparably low while a larger aperture is covered which yields a better selectivity than a traditional, Nyquist ($\lambda/2$) spaced $L$-channel antenna array. 
In baseband, the operation of the combining network can be described by complex weights applied to the antenna outputs with a subsequent combination of the received signals from the antennas. 
The combining (measurement) matrix that contains the complex weights and the antenna array form an effective ``compressive" array whose properties define the DOA estimation performance.
In the field of ``CS-DOA'' it is usually advocated to draw the coefficients of the measurement matrix from a random distribution (e.g., Gaussian, Bernoulli) \cite{WLP:09,GZS:11}. Random matrices have certain guarantees for signal recovery in the noise-free case and provide some stability guarantees in the noisy case \cite{BDDW:08, TT:11, EK:12}. However, since no criterion is used to design them, it is likely that they provide sub-optimal performance \cite{IRD:15}. 

In this paper, we discuss the design and the performance of \textit{compressive} arrays employing linear combinations in the analog domain by means of a network of power splitters, phase shifters, and power combiners.
We present a basic receiver architecture of such a compressive array and introduce a generic system model that includes different
options for the hardware implementation. Importantly, the model reflects the implications for the noise sources. 
Particularly, a well-known source of the receiver noise is the low noise amplifier (LNA) that is usually placed at the antenna outputs to account for the power losses of the following distribution/combining network. Depending on the frequency range, the components of the analog combining network  (power combiners, power splitters, phase shifters) will induce additional losses which also have to be compensated by the LNAs. 
To name an example, some typical commercially available phase shifters for phased array radar applications can induce insertion losses between \unit[5 to 10]{dB} depending on the frequency range \cite{TQS:16:Online}. This motivates the need for the signal amplification prior the combining network.

Based on the generic system model we then discuss the design of the combining matrix, with the goal to obtain an array that is suitable for DOA estimation (i.e., minimum variance of DoA estimates and robustness in terms of low side lobe levels or low probability of false detections). We consider two design approaches. The first approach is based on the spatial correlation function which is a low-complexity scheme that in certain cases even admits a closed-form solution. The second approach is based on the minimization of the Cramér-Rao Bound (CRB).
CRB-minimizing array designs tend to result in high sidelobes in the spatial correlation function which may lead to false estimates. 
In order to be able to constrain this effect, we analytically derive the probability to detect a false peak (sidelobe) for a given array manifold.
We then use this expression as an additional constraint in our design, thus limiting the probability of false detection to a pre-specified level.
Our numerical simulations demonstrate that both proposed design approaches have a significant performance improvement compared
to the state of the art, namely an array with a randomly chosen combining matrix and a sparse array with optimized sensor positions. 
Furthermore, the compressive array is not only superior to the random and sparse arrays with respect to its estimation capabilities but also in terms of its ability to alter its weights on demand and thus facilitate signal-adaptive measurements.
The comparison between the proposed designs demonstrates a trade-off between the minimization of the CRB and the increase in the sidelobe level. In the proposed design, the trade-off ``CRB vs. sidelobe level'' can be controlled by setting the parameters during the optimization. This provides an additional degree of freedom for the system design that is unavailable in case of random and sparse arrays.


It is worth mentioning that similar efforts in spatial domain processing exist in the context of beam space array processing \cite{LW:90,A:91,ZKS:93,XSRK:94,ZHM:96,G:98,HEGW:06,HK:88} and hybrid beamforming \cite{VV:10,HGB:10,NBH:10}. In contrast to the element space processing, where signals derived from each element are weighted and summed to produce the array output, the beam space processing is a two-stage scheme. The first stage takes the array signals as an input and produces a set of multiple outputs, which are then weighted and combined to form the array output. These multiple outputs may be thought of as the output of multiple beams. 
The weights applied to different beam outputs are finally optimized according to a specific optimization criterion \cite{G:04}. 
%
In hybrid beamforming, the main idea is to apply beamforming and precoding techniques in both, the radio-frequency (RF) and the baseband (BB) \cite{BL:14}. This technique has attracted significant research attention in millimeter wave (mmWave) applications \cite{PK:11} for the next-generation indoor and mobile wireless networks \cite{ARAPH:14,RSPLLKCA:14}. 
While the overall goal in these areas is similar (reducing the number of digitally processed receiver channels), the actual design criterion for  the antenna is entirely different from the one we consider in this paper. We aim to obtain an array that is ideally suited for DOA estimation in the sense that it achieves an accurate estimate (by minimizing the CRB) while controlling the sidelobe characteristics by a prespecified probability of detecting a false direction.

This manuscript is organized as follows. Section~\ref{sec_dm} introduces the data model for the compressive arrays we consider and discusses the impact of the different sources of noise. The proposed design is shown in Section~\ref{sec_design}, where we discuss the general approach as well as the two specific design methods based on the spatial correlation function and the CRB, respectively. Section~\ref{sec_design} also contains the derivation of the analytical expression for the probability to detect a false peak. A discussion is contained in Section~\ref{sec:disc},  covering various possible extensions and an analysis of the achievable estimation accuracy compared to sparse arrays. Numerical results are presented in Section~\ref{sec_sims} before concluding in Section~\ref{sec_concl}.

\begin{figure*}[tbh]
\vspace*{-\baselineskip}
    \centering
    \includegraphics[width=0.9\textwidth]{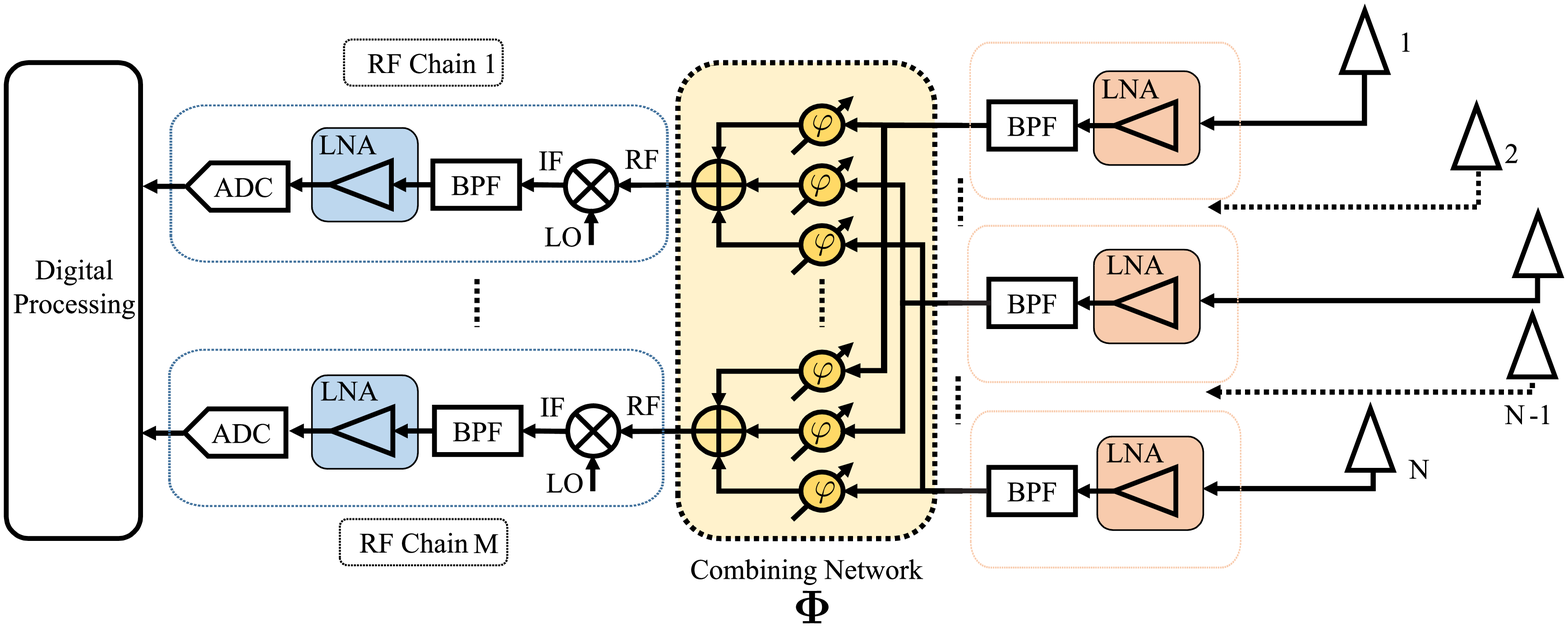}
    \caption{Compressive array hardware architecture}
    \label{fig:SystemModel}
    \vspace*{-0.2cm}
\end{figure*}

\section{System model for compressive arrays}\label{sec_dm}

\subsection{Data model for narrowband DOA estimation}
Consider $K$ narrowband plane waves impinging on an array of $N$ antenna elements. At the antenna output, the received (baseband) signal can be expressed as
\begin{equation}
  \bm{y}(t) =  \sum_{k = 1}^K \bm{a}(\bm{\gamma}_k) \cdot s_k(t) + \bm{n}(t), 
  \label{eqn:nocs_doa}
\end{equation}
where $\bm{y}(t)\in \mathbb{C}^{N \times 1}$ is a vector of antenna outputs, $\bm{n}(t) \in \mathbb{C}^{N \times 1}$ is an  additive noise vector, $t$ indicates the continuous time, and $\bm{a}(\bm{\gamma})$ denotes the antenna response as a function of the parameter vector $\bm{\gamma}^{\rm T}= [\theta, \psi, \bm{p}^{\rm T}]$ with $\theta$ and $\psi$  being the azimuth and the elevation angles, while $\bm{p} \in \mathbb{C}^{2 \times 1}$ represents the Jones vector that describes the polarization state of the incident plain wave at the receiver. Additionally, $s_k(t)$ in \eqref{eqn:nocs_doa} denotes the amplitude of the $k$th source, whereas $\bm{\gamma}^{\rm T}_k = [\theta_k, \psi_k, \bm{p}_k^{\rm T}]$  is the vector containing its azimuth ($\theta_k$)  and elevation ($\psi_k$) angles of arrival along with its Jones vector $\bm{p}_k^{\rm T} = [p_{k, 1}, p_{k,2}]$.
It is often useful to write \eqref{eqn:nocs_doa} in a matrix from as
\begin{equation}
   \bm{y}(t) = \bm{A} \cdot \bm{s}(t) + \bm{n}(t).
   \label{eqn:nocs_mtrdoa}
\end{equation}
Here, $\bm{A} = [\bm{a}(\bm{\gamma}_1), \bm{a}(\bm{\gamma}_2), \cdots, \bm{a}(\bm{\gamma}_K)] \in \mathbb{C}^{N \times K}$ is the array steering matrix and $\bm{s}(t) = [s_1(t), s_2(t), \cdots, s_K(t)]^{\rm T}\in \mathbb{C}^{K \times 1}$ is a vector containing the complex amplitudes of the $K$ sources. 


\subsection{Compressive arrays}
\label{sec:CSArray}

The model in \eqref{eqn:nocs_mtrdoa} presumes a dedicated
radio frequency (RF) receiver chain for each individual antenna element including a low-noise amplifier (LNA),
filters, down-conversion, analog-to-digital (ADC) conversion, etc. For specific applications, however, such separate RF chains
for each antenna element may come at a high cost in terms of the overall receiver complexity, the amount of data to be processed in the digital domain (e.g. FPGA) and power consumption. In order to reduce the number
of RF channels without a loss in the array aperture, we apply the \textit{compressive}
approach, where the antenna outputs are first linearly
combined in the analog domain and then passed through a lower
number  of RF chains to obtain the digital baseband signals as illustrated in Figure~\ref{fig:SystemModel}.
In this way, $M$ RF receiver channels (fewer than the $N$ antenna elements) are used
for signal processing in the digital domain. 

%
 The signal combining can be done at different stages within the receiver, e.g., on the RF (Radio Frequency) signal or at the IF (Intermediate Frequency) stage. The particular choice on where to place the combining network highly depends on the application, especially the considered frequency. In any case, additional signal losses will be introduced by the power splitters and combiners as well as the phase shifters inside the combining network. The actual losses' value will depend on multiple parameters including frequency, bandwidth and adaptability of the phase shifters. However, these losses need to be compensated by LNAs placed in each receiver chain as shown in Figure~\ref{fig:SystemModel}.

To this end, let $\bm{\Phi} \in \compl^{M \times N}$ denote the analog combining matrix
of a compressive array which compresses the output of $N$ antenna elements to $M$ active RF channels.
Then, the complex (baseband) antenna output \eqref{eqn:nocs_mtrdoa} after combining can be expressed as
\begin{align}
   \tilde{\bm{y}}(t) = \bm{\Phi}\left( \bm{A} \cdot \bm{s}(t) + \bm{v}(t) \right) + \bm{w}(t),
	\label{eqn:nm_doa}
\end{align}
%
where $[\bm{\Phi}]_{m,n} = \alpha_{m,n}e^{\jmath \varphi_{m,n}}, \alpha_{m,n} \in [0,1], \varphi_{m,n} \in [0, 2\pi], m = 1, 2,\cdots M, n = 1,2, \cdots, N$ \footnote{Note that $\bm{\Phi}$ does not need to be fully meshed, i.e., we do not need to connect each of the $N$ antennas to each of the $M$ outputs. In case the $n^{\rm{th}}$ antenna is not connected to the $m^{\rm{th}}$ RF channel the corresponding entry in the combining matrix is set to zero, e.g., $[\bm{\Phi}]_{m,n} = 0$.}, whereas $\bm{v}(t) \in \mathbb{C}^{N \times 1}$ and $\bm{w}(t) \in  \mathbb{C}^{M \times 1}$ are noise vectors with covariances $\bm{R}_{\rm vv}$ and $\bm{R}_{\rm ww}$ that represent additive noise sources  which act before and after\footnote{In the CS community, the former is often referred to  as ``signal noise'' and the latter as ``measurement noise''.} the combining network, respectively. 
For example, LNAs placed ahead the combining network contribute to $\bm{v}(t)$ (signal noise), whereas the ones placed behind the combining network contribute to $\bm{w}(t)$ (measurement noise).

Let $\tilde{\bm{A}} = \bm{\Phi} \cdot \bm{A}$ be the effective array steering matrix after combining, then \eqref{eqn:nm_doa} becomes
\begin{equation}
   \tilde{\bm{y}}(t) = \tilde{\bm{A}} \cdot \bm{s}(t) +  \tilde{\bm{n}}(t),
   \label{eqn:dm_doa}
\end{equation}
where
\begin{equation}
   \tilde{\bm{n}}(t) = \bm{\Phi} \cdot \bm{v}(t) + \bm{w}(t)
   \label{eqn:noise}
\end{equation}
is the effective noise vector with covariance $\bm{R}_{\rm{nn}} = \bm{\Phi} \bm{R}_{\rm vv} \bm{\Phi}^\herm + \bm{R}_{\rm ww}$. Assuming that $\bm{v}(t)$ and $\bm{w}(t)$ are white with elements that have variance $\sigma_1^2$ and $\sigma_2^2$, respectively, the covariance of $\bm{\tilde{n}}(t)$ becomes 
 $\bm{R}_{\rm{nn}}=\sigma_1^2 \bm{\Phi} \bm{\Phi}^\herm + \sigma_2^2 \bm{I}$.
 
Given \eqref{eqn:dm_doa}, we aim to design $\bm{\Phi}$ in such a way that it allows for a robust and efficient estimation of the DOAs of the $K$ sources $s_k(t)$ from the set of measurements $\tilde{\bm{y}}(t)$. Hence, our main design goal includes the minimization of the number of the receiver chains while providing a minimum variance of the DOA estimates and a reduced probability of spurious and/or ghost path estimates.

\section{Design of the combining matrix}\label{sec_design}

\subsection{Generic design approach}
Consider the receiver architecture from Fig. \ref{fig:SystemModel} where 
 the combining network is realized by: (i) splitting the analog RF signal of each of the $N$ antennas into $L\leq M$ branches; (ii) applying phase shifts in each of the branches; (iii) adding the branches to form each of the $M$ outputs, which are then passed to the $M$ RF chains. Mathematically, 
 we model this structure by a matrix $\bm{\Phi}$ with elements given by
 \begin{equation}
     [\bm{\Phi}]_{m,n} = \begin{cases}
      \frac{1}{\sqrt{L}} \cdot \eta \cdot {\rm e}^{\jmath \varphi_{m,n}} & \mbox{if $(n,m)$ are connected} \\
       0 & \mbox{otherwise},
    \end{cases}
    \label{eqn:phi_elements}
 \end{equation}
 where the connections between antennas and ports are such that $\bm{\Phi}$ has $L$ nonzero elements per column. In~\eqref{eqn:phi_elements}, the factor $\frac{1}{\sqrt{L}}$ represents the power splitting of each antenna's signal to $L$ branches and $\eta \in (0,1]$ is a scalar parameter that attributes for the fact that each analog branch (consisting of a power splitter, a phase shifter, and a combiner) is non-ideal and incorporates losses. A loss-less combining network would correspond to the special case $\eta=1$.
 
From  \eqref{eqn:phi_elements}, the combining matrix $\bm{\Phi}$ has $MN$ elements that provide $MN$ degrees of freedom for its design. In the CS literature, a typical approach for choosing $\bm{\Phi}$ would be to draw $\varphi_{m,n}$ randomly.  This, however, gives little control over the array characteristics. Furthermore, it might result in unwanted effects as high sidelobes and blind spots \cite{IRD:15}.

Here, we aim at a design of $\bm{\Phi}$  that results in an effective array
that has desired properties depending on the application scenario, e.g., uniform sensitivity and low cross-correlation for direction finding, adaptive spatial selectivity for parameter estimation during beam tracking, etc. 
 Generally, the design task can be formulated as the following constrained optimization problem
\begin{align}
    \bm{\Phi}_{\rm opt} = \arg \underset{\bm{\Phi}}{\min} J(\bm{\Phi}) \text{ s.t. } c(\bm{\Phi}, \alpha, \beta, \cdots),
    \label{eqn:opt_tsk_gnrl}
\end{align}
where $J(\bm{\Phi})$ is some objective function defined by the scenario and $c(\bm{\Phi}, \alpha, \beta, \cdots )$ represents the set of optimization constraints. In the following, we propose two particular formulations of \eqref{eqn:opt_tsk_gnrl} for direction finding applications: based on the spatial correlation function (SCF) and the Cram\'{e}r-Rao Lower Bound (CRB).

For the SCF-based approach, we build our design on the spatial correlation function 
defined as
\begin{equation}
     \rho(\bm{\gamma}_1, \bm{\gamma}_2) = \tilde{\bm{a}}(\bm{\gamma_1})^{\rm H}\cdot \ma{\tilde{a}}(\bm{\gamma}_2),
\end{equation}
where
  $  \ma{\tilde{a}}(\bm{\gamma}) = \ma{\Phi} \cdot \ma{{a}}(\bm{\gamma})$ presents the effective array manifold after combining.
The main idea is to design $\bm{\Phi}$ such that the spatial
correlation function $\rho(\bm{\gamma}_1, \bm{\gamma}_2)$ follows as close
as possible some pre-specified target $T(\bm{\gamma}_1, \bm{\gamma}_2)$. By defining an appropriate target function, we can provide desired properties in the spatial correlation function as discussed in the following.

Although SCF is often useful for gaining initial insight into the array's estimation capabilities, its ability to provide quantitative  evaluation of the achievable estimation quality is limited, especially in the case of multiple sources. Therefore, we propose a second approach that is based on the specific requirements on the estimation accuracy. More specifically, it aims at improving the accuracy of DOA estimation by designing $\bm{\Phi}$ such that it minimizes the CRB while keeping the probability of detecting a false direction at a certain (desired) level.

\subsection{Design based on the SCF}
\label{sec:SCF_design}

For the sake of simplicity, in the remainder of the paper it is assumed that the sources are located in the azimuthal plane of the antenna array and have an identical polarization of the impinging waves with perfectly matched antennas. Hence, the effective array manifold depends on the azimuth angle $\theta$ only, i.e., $ \ma{\tilde{a}}(\bm{\gamma}_1) =  \ma{\tilde{a}}(\theta_1)$. Note that an extension to a more general case is straightforward and is sketched in Section \ref{sec:disc_extns}.

Under these assumptions, an ideal generic array for
direction finding would satisfy the conditions
\begin{align}
  \rho(\theta_1, \theta_2) = \ma{\tilde{a}}(\theta_1)^\herm \cdot \ma{\tilde{a}}(\theta_2) 
		= \begin{cases}
		      {\rm const} & \theta_1 = \theta_2  \\
					0 & \theta_1 \neq \theta_2
		   \end{cases}. \label{eqn:manifold_ideal}
\end{align}
The first condition guarantees that the array gain is constant over all azimuth angles
and makes the array uniformly sensitive, whereas the second condition forces optimal cross-correlation properties to tell signals from different directions apart.
However, this is an example for a generic direction finder. For particular applications,
the design goal may differ, i.e., constraining on a certain sector of angles only or allowing
certain values for the residual cross-correlation. We denote the target function as $T(\theta_1,\theta_2)$,
where $T(\theta_1,\theta_2) = {\rm const} \cdot \delta(\theta_1 - \theta_2)$ represents the ideal generic array~\eqref{eqn:manifold_ideal}.

Due to the finite aperture of an $N$-element array, the target in \eqref{eqn:manifold_ideal} can only be achieved approximately. This allows us to define a criterion for the optimization of $\ma{\Phi}$ according to the cost function
\begin{align}
   e(\ma{\Phi},\theta_1,\theta_2) & = \left| \ma{\tilde{a}}(\theta_1)^\herm \cdot \ma{\tilde{a}}(\theta_2)  - T(\theta_1,\theta_2)\right|  \label{eqn:def_error}\\
	& = \left| \ma{{a}}(\theta_1)^\herm \cdot \ma{\Phi}^\herm \cdot \ma{\Phi} \cdot\ma{{a}}(\theta_2)  - T(\theta_1,\theta_2)\right|. \notag 
\end{align}
We can approximate the continuous variables $\theta_1$ and $\theta_2$ by considering the $P$-point sampling grid $\theta_p^{\rm (G)}, p=1, 2, \ldots, P$ 
and define the $P \times P$ matrices $\ma{E}$ and $\ma{T}$
according to $\ma{E}_{(i,j)} =  e(\ma{\Phi},\theta_{i}^{\rm (G)},\theta_{j}^{\rm (G)})$
and $\ma{T}_{(i,j)} =  T(\theta_{i}^{\rm (G)},\theta_{j}^{\rm (G)})$.
After insertion into \eqref{eqn:def_error} we obtain
\begin{align}
   \ma{E} = \left|\ma{A}^\herm \cdot \ma{\Phi}^\herm \cdot \ma{\Phi} \cdot \ma{A} - \ma{T}\right|. \label{eqn:error_matrix}
\end{align}
Based on \eqref{eqn:error_matrix}, the quality of $\ma{\Phi}$ can be assessed by a suitable norm of $\ma{E}$.
As a first step, let us consider the Frobenius norm, i.e., we optimize
$\ma{\Phi}$ according to
%
\begin{align}
   \ma{\Phi}_{\rm opt} = \mathop{\arg\min}_{\ma{\Phi}} \left\|\ma{E} \right\|_{\rm F}^2.
	\label{eqn:cf_main}
\end{align}
%
In the special case\footnote{This
condition is, e.g., fulfilled for an ULA if the sampling grid is chosen
to be uniform in the spatial frequencies (direction cosines).
Moreover, for many arrays the condition is approximately fulfilled (e.g.,
for UCAs).
In this case, the closed-form solution can still be applied as a heuristic
method. } where $\bm{A} \cdot \bm{A}^\herm = C \cdot \bm{I}_M$, with $C$ being a constant,
the optimization problem in~\eqref{eqn:cf_main} admits a closed-form solution
as shown in the following theorem.

\begin{theorem}\label{thm:closedform}
		Let $\bm{S} = \bm{A} \cdot \bm{T} \cdot \bm{A}^\herm$
		and let $\bm{S}_M$ be a rank-$M$-truncated version of $\bm{S}$ obtained
		by setting its $P-M$ smallest eigenvalues to zero.
		Then the set of 
		optimal solutions to \eqref{eqn:cf_main} is given by the set of matrices $\ma{\Phi}$
		that satisfy $\bm{\Phi}^\herm \bm{\Phi} = \bm{S}_M$.
\end{theorem}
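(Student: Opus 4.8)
The plan is to remove the elementwise modulus, pass to the Gram matrix $\ma{Q} = \ma{\Phi}^\herm \ma{\Phi}$, optimize over $\ma{Q}$, and then recover all $\ma{\Phi}$ realizing the optimal $\ma{Q}$. The modulus is immaterial for the cost, since $\fronorm{\ma{E}}^2 = \sum_{i,j}|\ma{E}_{(i,j)}|^2$ and squaring removes it, so
\[
\fronorm{\ma{E}}^2 = \fronorm{\ma{A}^\herm \ma{\Phi}^\herm \ma{\Phi} \ma{A} - \ma{T}}^2 .
\]
This depends on $\ma{\Phi}$ only through $\ma{Q}$, and the feasible set for $\ma{Q}$ is exactly $\{\ma{Q} : \ma{Q} = \ma{Q}^\herm \succeq 0,\ \mathrm{rank}(\ma{Q}) \le M\}$: every $M\times N$ matrix produces such a $\ma{Q}$, and conversely any PSD $\ma{Q}$ of rank $r\le M$ factors as $\ma{Q} = \ma{B}^\herm \ma{B}$ with $\ma{B}\in\compl^{r\times N}$, which after zero-padding to $M$ rows yields an admissible $\ma{\Phi}$. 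This bijection between objective values is what ultimately produces the ``set of solutions'' description in the statement.

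Next I would reduce the quartic cost in $\ma{Q}$ to a matrix-approximation problem. Assuming $\ma{T}$ Hermitian (natural for a correlation target, and without loss of generality, since $\ma{A}^\herm \ma{Q}\ma{A}$ is Hermitian and hence Frobenius-orthogonal to the skew-Hermitian part of $\ma{T}$), I expand
\[
\fronorm{\ma{A}^\herm \ma{Q}\ma{A} - \ma{T}}^2 = \traceof{\ma{A}^\herm\ma{Q}\ma{A}\ma{A}^\herm\ma{Q}\ma{A}} - 2\,\traceof{\ma{Q}\,\ma{A}\ma{T}\ma{A}^\herm} + \traceof{\ma{T}^2}.
\]
The crucial step is the special-case hypothesis $\ma{A}\ma{A}^\herm = C\,\ma{I}$: inserting it collapses the first (quartic) term to $C^2\traceof{\ma{Q}^2}$, so with $\ma{S} = \ma{A}\ma{T}\ma{A}^\herm$ the cost becomes $C^2\traceof{\ma{Q}^2} - 2\traceof{\ma{Q}\ma{S}} + \mathrm{const}$. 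Completing the square then shows that minimizing the cost is equivalent, up to an additive constant, to minimizing $\fronorm{\ma{Q} - C^{-2}\ma{S}}^2$; the positive scalar $C^{-2}$ is a pure normalization that disappears when the manifold is scaled so that $C=1$, giving the target $\ma{S}$ exactly as in the statement.

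The remaining problem is the best rank-$M$ positive-semidefinite Frobenius approximation of the Hermitian matrix $\ma{S}$. Here I would invoke the Hermitian Eckart--Young--Mirsky theorem: the optimal rank-$M$ approximant is the spectral truncation retaining the $M$ dominant eigenpairs of $\ma{S}$ and zeroing the rest, i.e.\ $\ma{S}_M$. Because the ideal target makes $\ma{T}\succeq 0$ and hence $\ma{S}\succeq 0$, this truncation is automatically PSD, so the unconstrained rank-$M$ minimizer is feasible for the tighter PSD-constrained problem and therefore optimal for it as well. Translating back through $\ma{Q} = \ma{\Phi}^\herm\ma{\Phi}$ gives precisely that the optimal $\ma{\Phi}$ are those with $\ma{\Phi}^\herm\ma{\Phi} = \ma{S}_M$.

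The step I expect to be the main obstacle is handling the semidefiniteness constraint jointly with the non-convex rank constraint: when $\ma{T}$ is indefinite, $\ma{S}$ may have negative eigenvalues, plain Eckart--Young no longer suffices, and the literal recipe of ``zeroing the smallest eigenvalues'' can return a non-PSD matrix; one must instead keep only the largest \emph{positive} eigenvalues, and the clean feasibility argument above is available only in the PSD case $\ma{S}\succeq 0$. A secondary point concerns uniqueness: the minimizing $\ma{Q}$ (hence $\ma{S}_M$) is well defined only when the $M$-th and $(M+1)$-th eigenvalues of $\ma{S}$ are separated, and even then the realizing $\ma{\Phi}$ is determined only up to an $M\times M$ unitary left factor, which is exactly why the conclusion is stated as a set of matrices.
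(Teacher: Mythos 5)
Your proposal is correct and ends at the same key lemma as the paper, but it gets there by a different reduction. The paper's proof completes the (scaled) row-orthogonal $\ma{A}$ to a unitary matrix $\ma{V}$, uses unitary invariance of the Frobenius norm to write $\fronorm{\ma{E}}^2 = \fronorm{\ma{V}\ma{E}\ma{V}^\herm}^2$, and reads off from the resulting $2\times 2$ block structure that the cost equals $\fronorm{\ma{\Phi}^\herm\ma{\Phi} - \ma{S}}^2$ plus $\ma{\Phi}$-independent blocks; you instead expand the squared norm in traces, use $\ma{A}\ma{A}^\herm = C\ma{I}$ to collapse the quartic term to $C^2\traceof{\ma{Q}^2}$, and complete the square. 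Both routes reduce \eqref{eqn:cf_main} to a low-rank approximation of $\ma{S}$ by the Gram matrix $\ma{Q}=\ma{\Phi}^\herm\ma{\Phi}$, settled by Eckart--Young, so the skeleton is shared; your trace computation is the more elementary of the two (no unitary-completion construction), while the paper's block form displays the discarded constant terms explicitly. What your write-up adds, and the paper's proof glosses over, is worth keeping: (i) the precise characterization of the feasible set of Gram matrices as Hermitian PSD matrices of rank at most $M$, together with the factor-and-zero-pad argument showing the truncated optimum is actually attained by some $\ma{\Phi}$ and that the solution set is exactly $\{\ma{\Phi}: \ma{\Phi}^\herm\ma{\Phi}=\ma{S}_M\}$ (the paper only argues ${\rm rank}(\ma{Q})\le M$ and implicitly assumes attainability); (ii) the normalization: strictly the minimizer is $\ma{Q}=C^{-2}\ma{S}_M$, so the theorem as stated holds after rescaling so that $C=1$ --- the paper's appendix silently absorbs this factor (its displayed chain is also dimensionally inconsistent in the scaling of $\ma{V}\ma{T}\ma{V}^\herm$, though this does not affect the conclusion since truncation commutes with positive scaling); and (iii) the observation that for indefinite $\ma{T}$ the literal recipe ``zero the $P-M$ smallest eigenvalues'' may return a non-PSD $\ma{S}_M$ that is no Gram matrix at all, so the statement implicitly requires $\ma{S}\succeq 0$ (true for the correlation targets used, e.g.\ $\ma{T}=C\ma{I}$ or $\ma{T}=\ma{A}^\herm\ma{A}$), and otherwise one must retain only the largest positive eigenvalues. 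Your closing remarks on non-uniqueness (eigenvalue gap at the cut, and the left-unitary ambiguity of $\ma{\Phi}$ given $\ma{Q}$) are likewise accurate and explain why the theorem is phrased as a set of solutions.
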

{\em Proof}: cf.~Appendix~\ref{app:proof_closedform}.

In other words, Theorem~\ref{thm:closedform} states that we can find an optimal~$\ma{\Phi}$ by computing a square-root factor of the best rank-$M$ approximation
of $\ma{S}$. Moreover, the following corollary can be found from Theorem~\ref{thm:closedform}:
 
\begin{cor}\label{cor:roworth}
		Under the conditions of Theorem~\ref{thm:closedform} any matrix $\bm{\Phi}$ is optimal
		in terms of the ``ideal'' target from~\eqref{eqn:manifold_ideal} if and only if the rows
		of $\bm{\Phi}$ have equal norm and are mutually orthogonal.
\end{cor}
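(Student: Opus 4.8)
The plan is to specialize Theorem~\ref{thm:closedform} to the ideal target~\eqref{eqn:manifold_ideal} and then re-express the resulting Gram-matrix condition as a statement about the rows of $\bm{\Phi}$. First I would note that, sampled on the grid, the ideal target $T(\theta_1,\theta_2) = \mathrm{const}\cdot\delta(\theta_1-\theta_2)$ becomes a scaled identity, $\bm{T} = \tau \bm{I}_P$ with $\tau > 0$. Substituting this into $\bm{S} = \bm{A}\bm{T}\bm{A}^\herm$ and invoking the standing assumption $\bm{A}\bm{A}^\herm = C\bm{I}$ gives $\bm{S} = \tau C\,\bm{I}_N =: \lambda \bm{I}_N$, so that $\bm{S}$ is a positive multiple of the identity and all of its eigenvalues coincide.

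Because the spectrum of $\bm{S}$ is completely degenerate, its rank-$M$ truncation $\bm{S}_M$ is not unique: any $M$-dimensional eigenspace may be retained, so $\bm{S}_M = \lambda \bm{P}$ where $\bm{P}$ is an arbitrary rank-$M$ orthogonal projector ($\bm{P} = \bm{P}^\herm = \bm{P}^2$, $\mathrm{rank}(\bm{P}) = M$). By Theorem~\ref{thm:closedform}, $\bm{\Phi}$ is then optimal if and only if $\bm{\Phi}^\herm\bm{\Phi} = \lambda \bm{P}$ for some such projector $\bm{P}$.

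The crux of the argument is the equivalence
\[
 \bm{\Phi}^\herm\bm{\Phi} = \lambda \bm{P}\ \text{for some rank-}M\text{ projector }\bm{P}
 \iff
 \bm{\Phi}\bm{\Phi}^\herm = \lambda \bm{I}_M .
\]
I would establish this via the singular value decomposition $\bm{\Phi} = \bm{U}\bm{\Sigma}\bm{V}^\herm$: the left-hand side forces the eigenvalues of $\bm{\Phi}^\herm\bm{\Phi}$ to be $\lambda$ with multiplicity $M$ and $0$ with multiplicity $N-M$, i.e. $\bm{\Phi}$ has full row rank $M$ with all singular values equal to $\sqrt{\lambda}$; this is exactly the content of $\bm{\Phi}\bm{\Phi}^\herm = \lambda\bm{I}_M$, and conversely the latter yields $\bm{\Phi}^\herm\bm{\Phi} = \lambda \bm{V}_M\bm{V}_M^\herm$ (with $\bm{V}_M$ collecting the associated right singular vectors), a scaled rank-$M$ projector. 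This step is where I expect the main care to be needed, since it rests on the degeneracy of the spectrum and on the non-uniqueness of $\bm{P}$.

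Finally, I would read the row structure directly from $\bm{\Phi}\bm{\Phi}^\herm = \lambda \bm{I}_M$. Writing $\bm{\phi}_i$ for the $i$-th row of $\bm{\Phi}$, the entry $[\bm{\Phi}\bm{\Phi}^\herm]_{i,j} = \bm{\phi}_i\bm{\phi}_j^\herm$ equals $\lambda$ on the diagonal, giving $\|\bm{\phi}_i\|_2^2 = \lambda$ for all $i$ (equal row norms), and vanishes off the diagonal, giving $\bm{\phi}_i\bm{\phi}_j^\herm = 0$ for $i\neq j$ (mutual orthogonality). Chaining the equivalences shows that $\bm{\Phi}$ is optimal if and only if its rows are mutually orthogonal and of common norm $\sqrt{\lambda}$; since the particular value of the common norm is fixed only by the free target constant, the structural characterization is precisely that of equal-norm, mutually orthogonal rows, as claimed.
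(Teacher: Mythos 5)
Your proof is correct and follows essentially the same route as the paper's: specialize the target to a scaled identity so that $\bm{S}$ becomes a scaled identity with fully degenerate spectrum, observe that the rank-$M$ truncation is then an arbitrary scaled rank-$M$ projector (the paper writes it as $\lambda\,\bm{U}_M\bm{U}_M^\herm$ with $\bm{U}$ an arbitrary unitary), and invoke Theorem~\ref{thm:closedform} to conclude $\bm{\Phi}^\herm\bm{\Phi}=\lambda\bm{P}$, equivalently $\bm{\Phi}\bm{\Phi}^\herm=\lambda\bm{I}_M$, i.e., equal-norm, mutually orthogonal rows. Your explicit SVD argument for that equivalence and your remark that the common norm is pinned only by the free target constant make the ``if and only if'' slightly more airtight than the paper's terse closing step ($\bm{\Phi}_{\rm opt}$ is a scaled version of $\bm{U}_M^\herm$), but the substance is identical.
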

\begin{proof}
cf.~Appendix~\ref{app:proof_corollary}.
\end{proof}
Corollary~\ref{cor:roworth} agrees with the intuition that the measurements
(i.e., the rows of $\ma{\Phi}$) should be chosen such
that they are orthogonal in order to make every observation as informative as possible.
In addition, the corollary shows that this choice also minimizes 
$\fronorm{\bm{\Phi}^\herm \bm{\Phi} - C \cdot P \cdot \ma{I}_N}$
which contains the correlations between all pairs of columns in $\bm{\Phi}$ as well as
the deviation of the columns' norms (therefore, in a sense, this choice minimizes the
``average'' mutual correlation).
On the other hand, this also demonstrates that the optimization in \eqref{eqn:cf_main} 
is not sufficiently 
selective since all row-orthogonal matrices achieve the same minimum of the cost function.

The cost function \eqref{eqn:cf_main} assigns an equal weight to the error for all 
pairs of grid points $\theta_1^{\rm (G)}, \theta_2^{\rm (G)}$, i.e., it tries to 
maintain a constant main lobe with the same weight as it tries to 
minimize sidelobes everywhere. In practice it is often desirable to have more control
over the shape of the spatial correlation function, e.g., trading main lobe ripple against sidelobe levels
or allowing for a transition region between the mainlobes and sidelobes that is not
constrained. There are many ways such constraints could be incorporated, e.g., maximum
constraints on the magnitude of cross-correlation in some region and interval constraints
on the autocorrelation inside the mainlobe. For numerical tractability, we follow a simpler
approach by introducing a weighting matrix $\ma{W} \in \real^{P \times P}$ into~\eqref{eqn:cf_main}. 
The modified optimization problem is given by
\begin{align}
   \ma{\Phi}_{\rm opt} = \arg \mathop{\min}_{\ma{\Phi}} \left\|\ma{E} \odot \ma{W} \right\|_{\rm F}^2,
	\label{eqn:cf_main_w}
\end{align}
where $\odot$ represents the Schur (element-wise) product. The weighting matrix allows to put
more or less weight on the main diagonal (controlling how strictly the constant mainlobe
power shall be enforced), certain off-diagonal regions (controlling how strongly sidelobes in
these regions should be suppressed), or even placing zeros for regions that remain arbitrary
(such as transition regions between the mainlobe and the sidelobes). Thereby, more flexibility
is gained and the solution can be tuned to more specific requirements.

The drawback of~\eqref{eqn:cf_main_w} is that it does not admit a closed-form solution in general.
However, it can be solved by numerical optimization routines that are available in modern
technical computing languages. 

\subsection{Design based on the CRB}
\label{sec:design_crb}



For the case of a single source, 
a correlation-based DOA estimator amounts to finding the DOA $\theta_{0}$ that corresponds to the global maximum in the beamformer spectrum $D(\theta)$, i.e.,
\begin{align} 
\theta_{0} = \arg\underset{\theta}{\max} \; D(\theta) \equiv \arg\underset{\theta}{\max} \left(\displaystyle \frac{\tilde{\bm{a}}^{\rm H}(\theta)\cdot\bm{R}\cdot\tilde{\bm{a}}(\theta)}{\|\tilde{\bm{a}}(\theta)\|_2^2}\right)
, \label{fals-det-prob}
\end{align}
where 
$\bm{R} = \mathbb{E} \{\tilde{\bm{y}}(t) \cdot \tilde{\bm{y}}^{\rm H}(t) \}$ is the covariance matrix of the received signals. Note that in this case,  (\ref{fals-det-prob}) is equivalent to the maximum likelihood (ML) cost function, and therefore, the correlation-based DOA estimator is equivalent to the ML estimator. We define then the false detection as the event where the global maximum in the beamformer spectrum $D(\theta)$ is outside the mainlobe area (either $3$-dB or null-to-null beamwidth). The error probability $P_{d}$ is hence given by
\begin{align} P_{d} \equiv \text{Prob}\big(D(\theta_{0}) < D(\theta), \forall \theta \in \mathcal{U}\big), \label{def_prob}\end{align} 
where $\mathcal{U}$ denotes the set of DOAs corresponding to the directions in the beamformer spectrum outside the mainlobe area, and 
\begin{align}
D(\theta_0)  &= \left| \bm{\tilde{a}}_0^{\rm H}\bm{\tilde{y}}(t) \right |^2 
= \left|   \bm{\tilde{a}}_0^{\rm H}\bm{\tilde{a}}_0{s}(t) + \bm{\tilde{a}}_0^{\rm H}\tilde{\bm{n}}(t)  \right |^2, \nonumber \\
%
D(\theta)&  = \left| \bm{\tilde{a}}^{\rm H}\bm{\tilde{y}}(t) \right |^2 
= \left|   \bm{\tilde{a}}^{\rm H}\bm{\tilde{a}}_0{s}(t) + \bm{\tilde{a}}^{\rm H} \tilde{\bm{n}}(t) \right |^2
\label{eqn:beamf_directions}
\end{align}
with $\bm{\tilde{a}}_k \equiv \mathbf{\Phi}\cdot \bm{a}(\theta_k)$ and $\bm{\tilde{a}} \equiv \mathbf{\Phi}\cdot\bm{a}(\theta)$. 

A direct evaluation of (\ref{def_prob}) is analytically intractable. In order to proceed, we adopt the approach from \cite{FA:05} and approximate the continuous correlation function $b(\theta,\theta_{0}) = |\tilde{\bm{a}}^{\rm H}\tilde{\bm{a}}_{0}|/(\left\| \tilde{\bm{a}}\right\| \left\| \tilde{\bm{a}}_{0}\right\|) $ of the antenna array by its discretized version
\begin{align} b^{\rm d}(\theta,\theta_{0}) = \sum_{q=0}^{L} b(\theta,\theta_{0})\delta(\theta-\theta_{q}), \label{gl_discretized} \end{align}
where  $\theta_{0}\notin \mathcal{U}$ and $\theta_{q}\in \mathcal{U} \; \forall q>0$ are the directions corresponding to the mainlobe and the sidelobe peaks in the array's correlation function, respectively, whereas $L$ is the total number of the sidelobe peaks and $\delta(\theta)$ denotes the Dirac delta function. Using (\ref{gl_discretized}), the false detection probability can now be approximated by 
\begin{align} P_{d} \approx  \mbox{Prob}\bigg(\bigcup_{q=1}^{L} \big\{ D(\theta_{0})-D(\theta_{q})<0 \big\}\bigg).\label{err_prob}
\end{align} 
In order to simplify the calculation, we apply the union bound \cite{KR:05} on (\ref{err_prob}), and obtain
\begin{align} P_{d} \leq \sum_{q=1}^{L}\mbox{Prob}\big( D(\theta_{0})-D(\theta_{q})<0 \big). \label{gl_union_bound}\end{align} 
Now it remains to compute the individual probabilities $P_{q}\equiv \mbox{Prob}\big( D(\theta_{0})- D(\theta_{q}) < 0\big)$, $\forall q \in [1, L]$, where $P_{q}$ denotes the probability that the $q$-th sidelobe peak is higher than the mainlobe peak.

\begin{theorem}
    Suppose $\bm{v}(t)$ and $\bm{w}(t)$ are independent zero-mean complex white Gaussian noise vectors with covariances $\sigma_1^2\bm{I}$ and $\sigma_2^2  \bm{I}$, respectively and let $\Psi_{q}(s)$ be the moment generating function (MGF) of $(D(\theta_{0})- D(\theta_{q}))$ defined by \eqref{eqn_mfg} in  App. \ref{app:prooffalsedetecion}.  Then, $P_{q} = \mbox{Prob}\big( D(\theta_{0})- D(\theta_{q}) < 0\big)$ can be computed as
    \begin{equation}
        P_{q} \approx \frac{1}{2G} \sum_{g=1}^{G} \hat{\Psi}_q\Bigg( \frac{(2g-1)\pi}{2G}\Bigg)
	\label{eqn:false_prob},
    \end{equation}
    where 
     $\hat{\Psi}_q(\tau ) = (1-j \tan(\tau /2))\Psi_{q}\big(-s^{\rm p}_{q}(1+j \tan(\tau /2))\big)$, $s^{\rm p}_q$ is the so-called saddle point \cite{CH:86}, and $G$ is a natural number that determines the accuracy of the approximation.
    
    \label{thr:prblt}
\end{theorem}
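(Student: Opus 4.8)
The plan is to identify $Z_q \eqdef D(\theta_0)-D(\theta_q)$ as an indefinite Hermitian quadratic form in a complex Gaussian vector, and then to recover $P_q=\mathrm{Prob}(Z_q<0)$ from its moment generating function by numerical inversion of a Bromwich integral, following the saddle-point technique of \cite{CH:86}. First I would collect the two matched-filter outputs appearing in \eqref{eqn:beamf_directions} into
\[
   \bm{\xi} = \begin{bmatrix} \tilde{\bm{a}}_0^\herm \tilde{\bm{y}}(t) \\ \tilde{\bm{a}}_q^\herm \tilde{\bm{y}}(t) \end{bmatrix} \in \compl^{2\times1},
\]
so that $D(\theta_0)=|[\bm{\xi}]_1|^2$, $D(\theta_q)=|[\bm{\xi}]_2|^2$, and hence $Z_q=\bm{\xi}^\herm\bm{Q}\,\bm{\xi}$ with the indefinite matrix $\bm{Q}=\diagof{1,-1}$. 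Conditioned on the source amplitude $s(t)$, the vector $\bm{\xi}$ is circularly-symmetric complex Gaussian: its mean is the deterministic signal contribution $s(t)\,[\,\tilde{\bm{a}}_0^\herm\tilde{\bm{a}}_0,\ \tilde{\bm{a}}_q^\herm\tilde{\bm{a}}_0\,]^\trans$, while its $2\times2$ covariance is obtained by sandwiching $\bm{R}_{\rm nn}=\sigma_1^2\bm{\Phi}\bm{\Phi}^\herm+\sigma_2^2\bm{I}$ between $\tilde{\bm{a}}_0$ and $\tilde{\bm{a}}_q$. This is precisely the step where the independence, zero-mean, whiteness and Gaussianity assumptions on $\bm{v}(t)$ and $\bm{w}(t)$ are invoked.

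Next I would compute the MGF $\Psi_q(s)=\expvof{e^{s Z_q}}$ in closed form. Since $Z_q$ is a Hermitian quadratic form in a (non-central) complex Gaussian vector, $\Psi_q(s)$ admits the standard determinant-plus-exponential expression -- recorded as \eqref{eqn_mfg} in the appendix -- of the form $\det(\bm{I}-s\,\bm{\Sigma}\bm{Q})^{-1}\exp\{\,s\,\bm{\mu}^\herm\bm{Q}(\bm{I}-s\,\bm{\Sigma}\bm{Q})^{-1}\bm{\mu}\,\}$, where $\bm{\mu}$ and $\bm{\Sigma}$ are the mean and covariance found above. The structural facts I would extract are that $\Psi_q$ is analytic in a vertical strip containing the imaginary axis, and that its only real-axis singularities are the poles contributed by $\det(\bm{I}-s\,\bm{\Sigma}\bm{Q})$, which bound the strip of convergence on either side of the origin.

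With the MGF in hand, the left-tail probability admits the exact Bromwich representation
\[
   P_q=\mathrm{Prob}(Z_q<0)=\frac{1}{2\pi\jmath}\int_{c-\jmath\infty}^{c+\jmath\infty}\frac{\Psi_q(s)}{s}\,\mathrm{d}s,
\]
with the abscissa $c$ placed inside the strip of convergence and on the side of the origin that selects the event $\{Z_q<0\}$. Following \cite{CH:86}, I would then locate the real saddle point of the integrand $\Psi_q(s)/s$ by solving $\frac{\mathrm{d}}{\mathrm{d}s}\big[\ln\Psi_q(s)-\ln s\big]=0$, write its abscissa as $-s^{\rm p}_q$, deform the contour onto the vertical line through it, and substitute $s=-s^{\rm p}_q\big(1+\jmath\tan(\tau/2)\big)$, which maps this line onto $\tau\in(-\pi,\pi)$. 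Using $\sec^2(\tau/2)=(1+\jmath\tan(\tau/2))(1-\jmath\tan(\tau/2))$, the measure $\mathrm{d}s/s$ collapses to $\tfrac{\jmath}{2}\big(1-\jmath\tan(\tau/2)\big)\,\mathrm{d}\tau$, so that, after exploiting the even/odd symmetry of the real and imaginary parts of the integrand, $P_q=\frac{1}{2\pi}\int_0^{\pi}\hat{\Psi}_q(\tau)\,\mathrm{d}\tau$ with $\hat{\Psi}_q(\tau)=(1-\jmath\tan(\tau/2))\,\Psi_q\!\big(-s^{\rm p}_q(1+\jmath\tan(\tau/2))\big)$. Approximating this integral by the $G$-node midpoint (Gauss--Chebyshev) rule at the nodes $\tau_g=(2g-1)\pi/(2G)$ with step $\pi/G$ then yields exactly \eqref{eqn:false_prob}.

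The main obstacle I anticipate is the clean derivation of the closed-form MGF for the \emph{indefinite, non-central} quadratic form: one must propagate the mean vector and the $2\times2$ covariance through the two matched filters and verify the determinant/exponent form, including that $\bm{I}-s\,\bm{\Sigma}\bm{Q}$ remains invertible along the chosen contour. The remaining delicate points are establishing that a single real saddle point exists inside the strip of convergence, and selecting $c$ (equivalently the sign of $s^{\rm p}_q$) on the correct side of the pole at $s=0$ so that the contour extracts $\mathrm{Prob}(Z_q<0)$ rather than its complement. Once these are settled, the quadrature step is a direct application of \cite{CH:86}, and the accuracy of the resulting estimate is controlled entirely by $G$.
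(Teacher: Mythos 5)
Your proposal is correct and follows the same overall strategy as the paper: recognize $D(\theta_0)-D(\theta_q)$ as an indefinite, non-central Hermitian quadratic form in a complex Gaussian vector, obtain its MGF in closed form, and invert numerically via the saddle-point contour of \cite{CH:86} with Gauss--Chebyshev quadrature, which yields \eqref{eqn:false_prob} exactly as you compute. The one genuine difference is the diagonalization step. You project the observation onto the two beamformer outputs, $\bm{\xi}=[\tilde{\bm{a}}_0^\herm\tilde{\bm{y}},\,\tilde{\bm{a}}_q^\herm\tilde{\bm{y}}]^\trans\in\compl^{2\times 1}$, write $Z_q=\bm{\xi}^\herm\bm{Q}\bm{\xi}$ with $\bm{Q}=\diagof{1,-1}$, and invoke the determinant-plus-exponential MGF in the $2\times 2$ parameters $(\bm{\mu},\bm{\Sigma})$. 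The paper instead keeps the full $M$-dimensional $\tilde{\bm{y}}$, writes $X_q=\tilde{\bm{y}}^\herm\bm{D}\tilde{\bm{y}}$ with the rank-$2$ matrix $\bm{D}=\tilde{\bm{a}}_0\tilde{\bm{a}}_0^\herm-\tilde{\bm{a}}_q\tilde{\bm{a}}_q^\herm$, pre-whitens by $\bm{R}_{\rm nn}^{-1/2}$, and eigendecomposes $\bm{B}=\bm{R}_{\rm nn}^{1/2}\bm{D}\bm{R}_{\rm nn}^{1/2}$ to reach the eigenvalue form \eqref{eqn_mfg} with non-centrality vector $\bm{\mu}=\bm{U}_{\rm n}^\herm\bm{R}_{\rm nn}^{-1/2}\bm{r}$. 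The two parameterizations are equivalent: with $\bm{A}_2=[\tilde{\bm{a}}_0,\tilde{\bm{a}}_q]$ one has $\bm{\Sigma}\bm{Q}=\bm{A}_2^\herm\bm{R}_{\rm nn}\bm{A}_2\bm{Q}$, whose nonzero eigenvalues coincide with those of $\bm{R}_{\rm nn}\bm{A}_2\bm{Q}\bm{A}_2^\herm=\bm{R}_{\rm nn}\bm{D}$ and hence of $\bm{B}$, so your determinant form reproduces \eqref{eqn_mfg} with $R\le 2$. Your route is more economical (everything is $2\times 2$ and the rank-$2$ structure is explicit), while the paper's whitening-plus-EVD route matches the cited references' diagonal form directly and produces the non-centrality vector \eqref{eqn:mu} that is reused verbatim in the multi-source extension of Section~\ref{sec:multi}, where $\bm{r}$ becomes $\sum_k \tilde{\bm{a}}_k s_k$. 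Two small remarks: a nonzero-mean complex Gaussian is not ``circularly-symmetric'' (proper Gaussian is the right term), and the existence/uniqueness of the real saddle point and the correct side of the pole at $s=0$ --- which you rightly flag as the delicate points --- are handled in the paper only by asserting convexity of the integrand and citing \cite{MLP:02}, \cite{FA:02}, so your treatment is, if anything, more explicit there.
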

\begin{proof}
    cf. App. \ref{app:prooffalsedetecion}.
\end{proof}

Applying Theorem \ref{thr:prblt} to \eqref{gl_union_bound}, we finally obtain
\begin{equation}
     P_{d} \leq \frac{1}{2G} \sum_{q=1}^{L} 
      \sum_{g=1}^{G} \hat{\Psi}_q\Bigg( \frac{(2g-1)\pi}{2G}\Bigg)
	\label{eqn:Pd_single_source}
\end{equation}

The analytic expression for the false detection probability can now be used to optimize the combining matrix $\bm{\Phi}$ with the objective to improve the DOA estimation accuracy. 
%
%
For detection of a single source, we can formulate it as 
\begin{align} 
 \ma{\Phi}_{\rm opt} = &\arg \underset{\bm{\Phi}}{\min}\max_{\theta_{0}}  \text{ CRB}(\bm{\Phi}, \theta_{0}) \label{opt_strat} \\\nonumber
&\mbox{s. t. }  P_{d}(\bm{\Phi}, \theta_{0}, \rho_{th}) < \epsilon_{0}, 
\label{eqn:CRB_optm}
%
\end{align} 
where $\text{ CRB}(\mathbf{\Phi}, \theta_{0})$ is given by the expression (\ref{single_doa}) in Appendix \ref{app:CRLB} and $\epsilon_0$ is the desired false detection level. The optimization strategy in (\ref{opt_strat}) aims to minimize the CRB and hence to improve the DOA estimation accuracy over the full angular range $(0, 2 \pi]$, while lowering the false detection probability to a desired value $\epsilon_{0}$ for a given SNR threshold point $\rho_{th}$. 

Both the constraint and the objective in (\ref{opt_strat}) are non-convex functions with respect to $(\mathbf{\Phi}, \theta_{0})$. The optimization problem is thereby a non-convex problem exhibiting a multi-modal cost function, where the optimal (global) solution can only be found by an exhaustive search strategy. Therefore, we apply a local minimizer to the above problem using an algorithm based on the interior-reflective Newton method \cite{CM:94}, \cite{CM:96}. However, by using this algorithm the obtained solution strongly depends on the initialization of the parameters $(\mathbf{\Phi}, \theta_{0})$. Moreover, there is no guarantee that the global optimum is found. 
One way of addressing this issue is to apply the algorithm several times, where for each run the initialization of the parameters $(\mathbf{\Phi}, \theta_{0})$ is different. In doing so, the obtained solution to (\ref{opt_strat}) is likely to be sufficiently close to the optimal solution. However, it might be time consuming due to the complexity of the optimization problem at hand. Another way of tackling this problem is to first obtain a solution for $\bm{\Phi}$ by the SCF approach described above and then use it for the initialization in (\ref{opt_strat}).


\section{Discussion}
\label{sec:disc}


\subsection{Multi-dimensional DOA}
\label{sec:disc_extns}
For simplicity, we have discussed the design of the compression matrices
only for 1-D case, i.e., the estimation of the azimuth angle, assuming
that all the sources are located on the plane of the array.
However, it is straightforward to generalize the design to the 2-D case where both, azimuth and elevation, are considered. For example, for the approach from Section~\ref{sec:SCF_design}
we can define the spatial correlation function in the 2-D case as
$\bm{\tilde{a}}(\theta_1,\psi_1)^\herm \bm{\tilde{a}}(\theta_2,\psi_2)$
and define a criterion to optimize $\bm{\Phi}$ via
\begin{align}
    e(\bm{\Phi},\theta_1,\theta_2,\psi_1,\psi_2)
    =& \left|\bm{\tilde{a}}(\theta_1,\psi_1)^\herm \bm{\tilde{a}}(\theta_2,\psi_2)\right. \notag \\
    & \left. - T(\theta_1,\theta_2,\psi_1,\psi_2)\right|.
\end{align}
Here $T(\theta_1,\theta_2,\psi_1,\psi_2)$ is the target SCF which
could for example be chosen as $T(\theta_1,\theta_2,\psi_1,\psi_2) = {\rm const}\cdot 
\delta(\theta_1-\theta_2)\cdot\delta(\psi_1-\psi_2)$.
To minimize the cost function 
we can introduce a $P_\theta$ by $P_\psi$
2-D sampling grid in 
azimuth and elevation and then align the sampled cost function
into a matrix $\bm{E}$ of size $P_\theta P_\psi \times P_\theta P_\psi$.

A similar extension is possible to incorporate the polarization of the incoming
wave. We can express the dependence of the array on the polarization of the incident
wave via
\begin{align}
    \bm{a}(\theta,\alpha,\phi)
    = \left[\bm{a}_{\rm H}(\theta), \, \bm{a}_{\rm V}(\theta)\right]\cdot
    \underbrace{\begin{bmatrix} \cos(\alpha) \\ \sin(\alpha)\cdot {\rm e}^{\jmath \phi}\end{bmatrix}}_{\bm{p}(\alpha,\phi)}
\end{align}
where $\bm{a}_{\rm H}(\theta)$ and $\bm{a}_{\rm V}(\theta)$ represent
the array response to a purely horizontal and a purely vertical incident plain wave, respectively,
and 
the parameters $\alpha,\phi$ describe the polarization state\footnote{Values of
$\phi=0$ and $\phi=\frac{\pi}{2}$ correspond to linear and circular polarized waves,
respectively. All other values of $\phi$ imply an elliptical polarization. The angle $\alpha$ describes the orientation of the polarization plane (e.g., $\phi=0$, $\alpha=0$
corresponds to a horizontal and $\alpha=\frac{\pi}{2}$ to a vertical polarized wave, respectively).} of the wave. 
We then can optimize $\bm{\Phi}$ by minimizing the error
$e(\theta_1,\alpha_1,\phi_1,\theta_2,\alpha_2,\phi_2) = 
|\bm{a}^\herm(\theta_1,\alpha_1,\phi_1)\cdot \bm{a}(\theta_2,\alpha_2,\phi_2)
- T(\theta_1,\theta_2)|$. Note that the target does not depend on the polarization
parameters since our goal is to achieve a high separation in the angular domain
for any polarization (and not a separation in the polarization domain).
This cost function can be minimized by defining a multidimensional grid
which leads to a corresponding error matrix $\bm{E}$ of size 
$P_\theta P_\alpha P_\varphi \times P_\theta P_\alpha P_\varphi$.

The same extensions are also possible to the CRB-based approach from Section~\ref{sec:design_crb}. 
For example, we can  incorporate both azimuth and elevation by defining the 2-D beamformer spectrum as
\begin{equation}
     D(\theta, \psi) = \bm{\tilde{a}}(\theta,\psi)^\herm \cdot \bm{R} \cdot \bm{\tilde{a}}(\theta,\psi),
\end{equation}
and writing \eqref{gl_union_bound} as
\begin{align} 
P_{d} \leq \sum_{q=1}^{\bar{L}}\mbox{Prob}\big( D(\theta_{0}, \psi_0)-D(\theta_{q}, \psi_q)<0 \big),
\label{gl_union_bound_ext}
\end{align}
where ${L}$ is now the total number of sidelobe peaks in the discretized 2-D correlation function. The expression for the individual probabilities \mbox{$P_q = \mbox{Prob}\big( D(\theta_{0}, \psi_0)-D(\theta_{q}, \psi_q)<0 \big)$} remains the same as in \eqref{eqn:false_prob} where the MGF is calculated for $D(\theta_{0}, \psi_0)-D(\theta_{q}, \psi_q)$ instead of $D(\theta_{0})-D(\theta_{q})$. The same holds for the total false detection probability given by \eqref{eqn:Pd_single_source}.

\subsection{Arbitrary number of sources} 
\label{sec:multi}

So far, we have discussed the case of a single source's wave impinging on the antenna array. However,  we can easily extend the CRB design presented in Section \ref{sec:design_crb} to account for the presence of multiple signal sources by applying in \eqref{eqn:CRB_optm} the full CRB given by \eqref{eqn:CRLB} and modifying the false detection probability expression in \eqref{eqn:Pd_single_source}. Particularly, assuming a correlation-based DOA estimator,  we need to compute the probability that one strongest source is falsely detected in the presence of $K-1$ weaker ones. In light of that, \eqref{eqn:beamf_directions} becomes
\begin{align}
D(\theta_0)  &= \left| \bm{\tilde{a}}_0^{\rm H}\bm{\tilde{y}}(t) \right |^2 
= \left|   \sum_{k=0}^{K-1}\bm{\tilde{a}}_0^{\rm H}\bm{\tilde{a}}_k{s}_k(t) + \bm{\tilde{a}}_0^{\rm H}\tilde{\bm{n}}(t)  \right |^2, \nonumber \\
D(\theta_q)&  = \left| \bm{\tilde{a}}_q^{\rm H}\bm{\tilde{y}}(t) \right |^2 
= \left|   \sum_{k=0}^{K-1}\bm{\tilde{a}}_q^{\rm H}\bm{\tilde{a}}_k{s}_k(t) + \bm{\tilde{a}}_q^{\rm H} \tilde{\bm{n}}(t) \right |^2
\label{eqn:beamf_directions_K}
\end{align}
and $P_{q} = \text{Prob} \big( D(\theta_{0})- D(\theta_{q}) < 0\big)$ can again be calculated  by
\begin{equation}
P_{q} \approx \frac{1}{2G} \sum_{g=1}^{G} \hat{\Psi}_q\Bigg( \frac{(2g-1)\pi}{2G}\Bigg).
	\label{eqn:false_prob_K}
\end{equation}
The difference between \eqref{eqn:false_prob_K} and \eqref{eqn:false_prob} is that in case of multiple sources the vector $\bm{r}$ in \eqref{eqn:mu} in Appendix \ref{app:prooffalsedetecion} becomes equal to $ \sum_{k=0}^{K-1}\bm{\tilde{a}}_{k} s_{k}$ when calculating the non-centrality parameters of the MGF.


\subsection{Adaptive design}
\label{sec:adptv_dsn}

In \eqref{eqn:opt_tsk_gnrl}, our target is a static combining matrix  that
yields an array with certain properties, such as uniform sensitivity and low sidelobe level, which is a good choice if no prior knowledge of the targeted sources is available. However, we can extend this approach towards an adaptive design that makes use of
the fact that for a slowly changing scene, the estimates from the previous
snapshots provide prior information about the source locations
in the next snapshots. This fact can be utilized for adaptive focusing of the array’s sensitivity towards regions of interest where the targets are expected \cite{IRG:15}.
In doing so, the SNR and the effective resolution in these directions of interest can be further improved, resulting in a superior DOA estimation performance.
To achieve this, we adopt a sequential measurement strategy which
starts with a combining matrix designed for uniform sensitivity and then gradually refine it towards the directions of interest that have been identified in the observations collected so far \cite{IRG:15}.

The adaptation mechanism proceeds as follows:
\begin{enumerate}
	\item We begin by scanning the scene with a matrix $\ma{\Phi}$ designed according to \eqref{eqn:cf_main} or \eqref{eqn:cf_main_w}, designed for a uniform target $T$ or according to \eqref{eqn:CRB_optm} for the full angular range $\theta_0 \in (0, 2\pi]$.
	\item Identify regions of interest based on an estimate of the angular power spectrum
	     obtained from the initial observation(s).
	\item Define a focusing region $\Theta$ as the union of all regions of interest.
	\item Modify $\ma{\Phi}$ by solving \eqref{eqn:cf_main} or \eqref{eqn:cf_main_w} for a target designed for the focusing region $\Theta$ in the SCF-based approach or solving \eqref{eqn:CRB_optm} with a restricted angular range.
	\item As the sources are assumed to change their position gradually, track sources by repeating steps (2) to (4)
	      sequentially, moving the regions of interest along with the currently identified source locations.
	\item Every $S$ snapshots, rescan the scene with a matrix $\ma{\Phi}$ designed for a uniform sensitivity
	      in order to detect newly appearing sources. If new sources are found, incorporate their location into the set $\Theta$.
\end{enumerate}
The parameter $S$ represents a design parameter that determines how quickly the system reacts to sources appearing outside the current direction of interest.
Note that this adaptation mechanism allows for many degrees of freedom, e.g., in terms of the rate of adaptation of
$\ma{\Phi}$ or the definition of the focusing regions. Some results on the performance of such an adaptive design based on an example of the SCF optimized combining matrix can be found in \cite{IRG:15}.

\subsection{Estimation quality}

In this section, we provide an analysis of the achievable performance of the proposed compressive arrays for DOA estimation.
In the noisy case, there are two main estimation quality measures: the achievable estimation accuracy, and the resolution capabilities.

\subsubsection{Estimation accuracy}
For a fixed aperture, the achievable accuracy is mainly determined by the SNR at the output of the antennas. For this reason, we compare the SNR of a compressive array and a sparse array at the same number of active channels $M$.

%
%

 We express the output signal for a single source via~\eqref{eqn:nm_doa} as
\begin{align}
   \bm{\tilde{y}}(t) = \bm{\Phi} \cdot \bm{a}(\bm{\gamma}_1)\cdot s_1(t) + \bm{\tilde{n}}(t),
\end{align}
where the elements of $\bm{\Phi}$ are given by~\eqref{eqn:phi_elements}
and the covariance of $\bm{\tilde{n}}(t)$ is given by $\bm{R}_{\rm nn} = \sigma_1^2\bm{\Phi}\bm{\Phi}^\herm +
\sigma_2^2  \bm{I}_M$. 
The SNR of the compressed array can then be computed as
\begin{align}
   \rho_{\rm c} & = \frac{\expvof{\left\|\bm{\Phi} \cdot \bm{a}(\bm{\gamma}_1)\cdot s_1(t)\right\|^2}}
               {{\traceof{\bm{R}_{\rm nn}}}} \notag \\ 
       & = \frac{\left\|\bm{\Phi} \cdot \bm{a}(\bm{\gamma}_1)\right\|^2 \cdot P_{\rm s}}{\traceof{\bm{\Phi}^\herm \bm{\Phi}}\sigma_1^2  + M\sigma_2^2} \notag \\
        & = \frac{
        \traceof{\bm{\Phi} \cdot \bm{a}(\bm{\gamma}_1) \cdot \bm{a}^\herm(\bm{\gamma}_1) \cdot \bm{\Phi}^\herm} 
        \cdot P_{\rm s}}
        {\traceof{\bm{\Phi}^\herm \bm{\Phi}}\sigma_1^2 + M \sigma_2^2},
        \label{eqn:snr_gamma}
\end{align}
where $P_{\rm s} = \expvof{|s_1(t)|^2}$ is the source power. 

As evident from \eqref{eqn:snr_gamma}, the SNR is dependent on the parameter vector $\bm{\gamma}$,
i.e., on the DOA. It is therefore meaningful to consider the average SNR over all possible
source directions. This requires to compute the average of
$g(\bm{\gamma}) = \left\|\bm{\Phi} \cdot \bm{a}(\bm{\gamma})\right\|^2$ over $\bm{\gamma}$ which is not possible
without further assumptions either about the array or about $\bm{\Phi}$.
Let $\bar{g}$ be the average of $g(\bm{\gamma})$ over $\bm{\gamma}$, i.e.,
$\bar{g} = \Gamma^{-1} \int g(\bm{\gamma}) {\rm d} \bm{\gamma}$
with $\Gamma = {\int 1 {\rm d}{\bm{\gamma}}}$.
Moreover, let us define the matrix $\bm{J} = \Gamma^{-1} \int \bm{a}(\bm{\gamma}) \cdot \bm{a}(\bm{\gamma})^\herm {\rm d}{\bm{\gamma}}$ so that $\bar{g} = \traceof{\bm{\Phi} \cdot \bm{J} \cdot \bm{\Phi}^\herm}$.
To proceed, we would like to replace $\bar{g}$ by $\traceof{\bm{\Phi}  \bm{\Phi}^\herm}$. 
We can always do so when $\bm{J}=\bm{I}_M$ which implies that the beam patterns of all
antennas are orthogonal over the entire parameter space. This is, e.g., fulfilled for an ULA
if it is parametrized by spatial frequencies $\mu=\cos(\theta)$. 
Furthermore, for $\bm{J}\neq \bm{I}_M$ one can show that 
$\expvofsub{\bm{\Phi}}{\bar{g}} = \expvofsub{\bm{\Phi}}{\traceof{\bm{\Phi}\bm{\Phi}^\herm}}$
for any random ensemble of $\bm{\Phi}$ where its elements are i.i.d. Note that in our case,
due to \eqref{eqn:phi_elements}, 
$\traceof{\bm{\Phi} \bm{\Phi}^\herm}$ is not random but deterministic. Hence, the expectation
on the right-hand side is not needed.
In light of this assumption, 
we can express the average SNR $\bar{\rho}_{\rm c}$ (averaged over $\bm{\gamma}$)
as
\begin{align}
\bar{\rho}_{\rm c} = 
   \frac{\traceof{\bm{\Phi} \bm{\Phi}^\herm} \cdot P_{\rm s}}{\traceof{\bm{\Phi}^\herm \bm{\Phi}} \sigma_1^2+ M \sigma_2^2}
   = \frac{   \left\|\bm{\Phi}\right\|_{\rm F}^2 \cdot P_{\rm s}}{\left\|\bm{\Phi}\right\|_{\rm F}^2
\sigma_1^2 + M \sigma_2^2}
\end{align}
Using~\eqref{eqn:phi_elements} it is easy to see that 
$\left\|\bm{\Phi}\right\|_{\rm F}^2 = \left(\frac{\eta}{\sqrt{L}}\right)^2 \cdot N \cdot L = \eta^2\cdot M$. Therefore, the average SNR becomes
\begin{align}
\bar{\rho}_{\rm c}  
   = \frac{\eta^2 \cdot N \cdot P_{\rm s}}{\eta^2 \cdot N \sigma_1^2 + M \sigma_2^2}
   = \frac{P_{\rm S}}{\sigma_1^2 + \sigma_2^2} \cdot
     \frac{\eta^2 \cdot N}{\frac{\eta^2 \cdot N}{1+\frac{\sigma_1^2}{\sigma_2^2}}
                          +\frac{M             }{1+\frac{\sigma_2^2}{\sigma_1^2}} }
\end{align}
To compare this SNR to the one that can be achieved with a sparse array we model
the observed signal as $\bm{a}_{\rm s}(\bm{\gamma}_1) \cdot s_1(t) + \bm{w}_{\rm s}(t)$,
where,
to make the comparison fair, 
the elements of $\bm{w}_{\rm s}(t)$ are i.i.d.~with
 variance $\sigma_1^2+ \sigma_2^2$.
We then obtain for the SNR of a sparse
array
\begin{align}
  \bar{\rho}_{\rm s} = \frac{M \cdot P_{\rm s}}{M  (\sigma_1^2+ \sigma_2^2)}  
  = \frac{ P_{\rm s}}{\sigma_1^2+ \sigma_2^2}.
\end{align}
Therefore, the ratio of the SNRs becomes
\begin{align}
  \frac{\rho_{\rm c}}{\rho_{\rm s}} =
   \frac{\eta^2 \cdot N}{\frac{\eta^2 \cdot N}{1+\frac{\sigma_1^2}{\sigma_2^2}}
                        +\frac{M             }{1+\frac{\sigma_2^2}{\sigma_1^2}} }
\end{align}
Overall, this shows that for dominating signal noise (i.e., $\sigma_1^2 \gg \sigma_2^2$)
we have $\rho_{\rm c} \approx \rho_{\rm s}$ and thus there is no SNR gain from
using the compressed arrays. On the other hand, for dominating measurement noise
(i.e., $ \sigma_2^2 \gg \sigma_1^2$), the SNR ratio approaches $\eta^2 \frac{N}{M}$ which
means an SNR improvement if the efficiency of the lossy components satisfies $\eta > \sqrt{\frac{N}{M}}$.
In practice, the compression ratio $\frac{N}{M}$ can be quite high and therefore, the
SNR improvement of the compressive arrays can be very significant.




\subsubsection{Resolution}
The ability to distinguish closely spaced sources is an important characteristic of an antenna array. The achievable resolution of the array mainly depends on its aperture, i.e., the largest distance between pairs of antenna elements. For ULAs, the aperture is equal to $(N-1)\lambda/2$ since the elements are spaced in half-wavelength distance from each other. For compressive arrays as well as sparse arrays, this distance can be increased further. As a result, the array's correlation function becomes sharper, at the price of an increase in sidelobes (grating lobes). We can control the height of the grating lobes by proper design of the antenna placement (in the case of sparse arrays) as well as the analog combining network (in the case of compressive arrays). In general, we expect that at the same covered aperture and the same number of active RF chains, the compressive arrays will have lower sidelobes (since the many degrees of freedom in the analog combining network allow to suppress the sidelobes significantly). As it is difficult to quantify the achievable sidelobe suppression analytically, we will focus on this aspect in the numerical results in Section~\ref{sec_sims}.

\section{Numerical results}\label{sec_sims}
In this section, we evaluate the performance of the compressive array with optimized combining network and compare it to its closest counterparts in terms of the aperture and hardware complexity, namely random and sparse arrays. We perform the numerical study based on a uniform circular array (UCA) with $N = 9$ elements that are compressed to $M=5$ receiver channels (this amounts to $\approx 1.8$ times reduction in the number of receiver channels). Note that for an UCA with isotropic elements the response of the $n^{\rm{th}}$ antenna element as a function of the azimuth angle $\theta$ can be written as
\begin{equation}
a_n(\theta) = e^{\jmath 2 \pi \tilde{R} \cos(\theta - \vartheta_n)},
\end{equation} 
where 
$\vartheta_n = 2 \pi (n-1)/N$ with $n = 1, 2, \cdots, N$ and $\tilde{R}= \frac{R}{\lambda}$ is the array radius normalized to the wavelength. For both proposed designs, the radius $\tilde{R}$ was fixed and set to $0.65$. 

The combining matrix $\ma{\Phi}$ is chosen according to
$[\ma{\Phi}]_{(m,n)} = {\rm e}^{\jmath \varphi_{m,n}}$, where $\varphi_{m,n}$
are the optimization variables in the proposed approaches. To find an optimized design $\ma{\Phi}_{\rm opt}$ we solve
the weighted optimization problems~\eqref{eqn:cf_main_w} and \eqref{eqn:CRB_optm} via
\textsc{Matlab}'s numerical optimization features.
Since run-time is not a concern for an off-line design,
and in order to avoid local minima, we
run \texttt{fmincon} and \texttt{fminimax} to solve \eqref{eqn:cf_main_w} and \eqref{eqn:CRB_optm}, respectively, with 100 random initializations and pick
the solution with the smallest value of the cost function. In the following,  we refer to the design obtained by the SCF optimization approach from \eqref{eqn:cf_main_w} as Opt SCF, whereas the design obtained as a result of the CRB minimization from \eqref{eqn:CRB_optm} is referred to as Opt CRB. For the Opt SCF approach,
 we set $\ma{T} = \ma{A}^\herm \cdot \ma{A}$ as a target which is
the correlation function we would achieve with an $M$-element (uncompressed) UCA. For the Opt CRB approach, the threshold false detection probability is set to $0.05$ to be achieved at an SNR of $0$dB. 

\subsection{Performance analysis for a single source}

We begin by examining the performance of the optimized compressive arrays with respect to the attainable CRB and the sidelobe level in the case of a single source as discussed in Sections \ref{sec:SCF_design} and \ref{sec:design_crb}.

\begin{figure}[t!]
    \centering
    \includegraphics[width=0.9\linewidth]{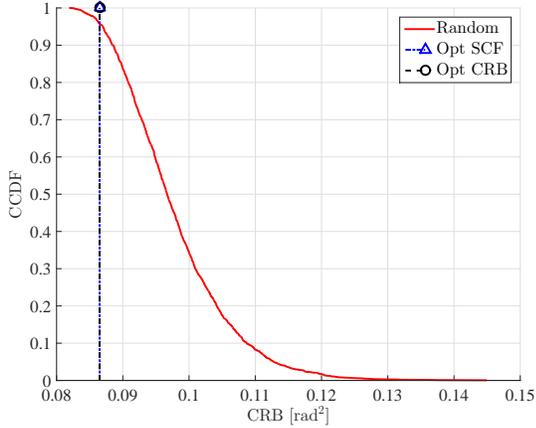}
    \caption{Comparison of the CRBs of the optimized compressive arrays versus the random ones. The CCDF of the CRB for 5000 random realizations are shown together with that of the optimized kernel.}
    \label{fig:crb_opt_rand}
    \vspace*{-0.2cm}
\end{figure}

\subsubsection{Comparison with random arrays}

	
Figure \ref{fig:crb_opt_rand} shows the achievable CRB 
of the compressive arrays with the optimized combining network and the random ones that have the same number of antennas and sampling channels while $\varphi_{m,n}$ are drawn uniformly at random from $(0, 2\pi]$. At a fixed SNR level of \unit[0]{dB}, an estimate of the Complementary Cumulative Distribution Function (CCDF) of the CRB obtained from $5000$ random realizations of $\bm \Phi $ is shown. The optimized networks for both approaches have been designed to achieve the same CRB. As evident from the
figure, the CRB of the optimized compressive arrays is almost in every case lower than that of the random ones. In other words, the random kernel can potentially provide a performance comparable to the optimized ones but with a very low probability. The CCDFs of the average sidelobe levels for the same scenario are shown in Figure \ref{fig:sl_opt_rand}. We observe that the random compressive arrays provide significantly higher sidelobe levels compared to both the SCF and the CRB-optimized ones. This supports the intuition that designing the combining matrix randomly results in sub-optimal performance.

Comparing the sidelobe level of the SCF and CRB-optimized compressive arrays, we can notice that the latter has a lower sidelobe level at a specific CRB. This is confirmed by the corresponding (analytic) probabilities of false detection depicted in Figure \ref{fig:Pd_SNR}.
For comparison Figure \ref{fig:Pd_SNR} also presents the results for the uncompressed UCAs with $N = 9$ and $N = 5$ (i.e., with smaller aperture size) as well as the average $P_{\rm d}$ for the random arrays.
As can be seen, both proposed approaches achieve lower probability of false detection $P_{\rm d}$ than the uncompressed UCA with a lower number of antennas, whereas the sparse arrays on average are significantly inferior to all the rest.  
It is worth nothing that for both proposed optimization approaches, the CRB and the sidelobe level (and hence the probability of false detection) can be controlled. 
In the SCF-based design this can be done by a proper choice of the weighting matrix in \eqref{eqn:cf_main_w}, whereas in the CRB-based approach the false detection probability is chosen directly.

\begin{figure}[t!]
    \centering
    \includegraphics[width=0.9\linewidth]{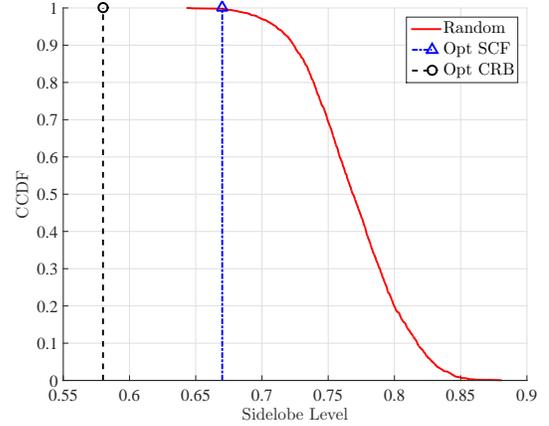}
    \caption{Comparison of the sidelobe levels of the optimized linear combining network versus the random ones. The CCDF of the mean sidelobe levels for 5000 random realizations are shown together with that of the optimized kernel.}
    \label{fig:sl_opt_rand}
    \vspace*{-0.2cm}
\end{figure}

\begin{figure}
    \centering
    \includegraphics[width=0.9\linewidth]{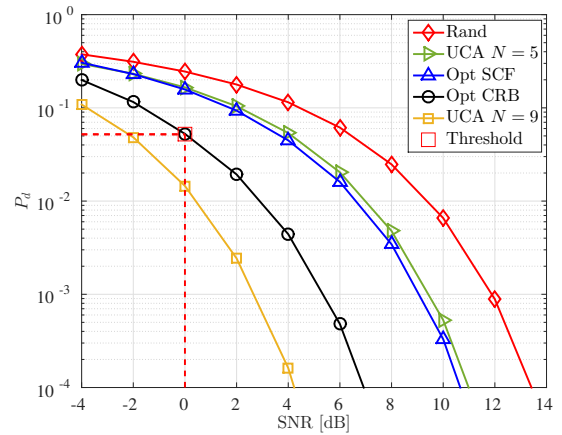}
    \caption{False detection probability of the uncompressed UCAs, compressive arrays, and the random ones.}
    \label{fig:Pd_SNR}
    \vspace*{-0.2cm}
\end{figure}

\begin{figure}[t]
    \centering
    \includegraphics[width=0.9\linewidth]{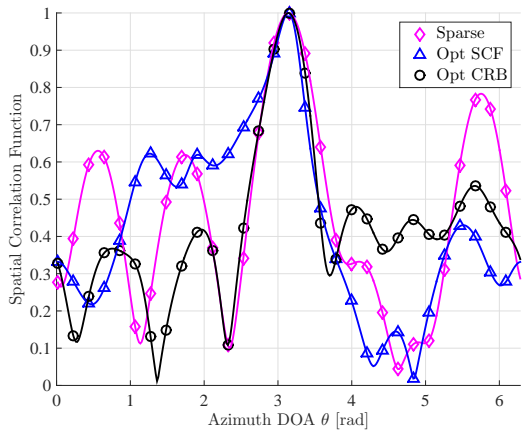}
    \caption{The spatial correlation function at a specific DOA of an optimized sparse array and a compressive array with the same number of receiver chains and an optimized combining network}
    \label{fig:sparse_vs_comp}
    \vspace*{-0.2cm}
\end{figure}
\begin{figure}[t]
    \centering
    \includegraphics[width=0.9\linewidth]{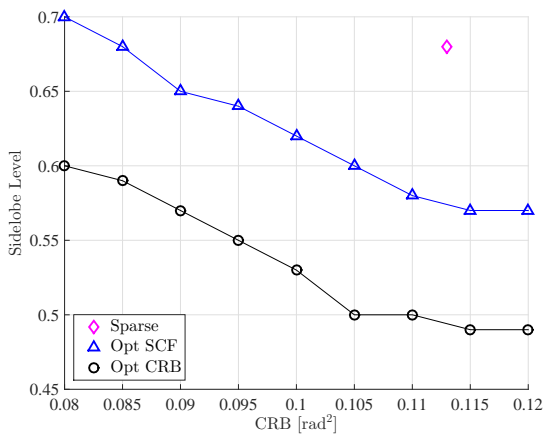}
    \caption{The CRB versus the sidelobe level for a compressive array with an optimized combining network and a sparse array (single diamond marker)}
    \label{fig:sl_crb}
    \vspace*{-0.2cm}
\end{figure}




\subsubsection{Comparison with sparse arrays}	
	Now we compare the compressive array to a sparse array that has the same number of receiver chains, i.e., for the considered scenario it means that for a sparse array $N=M=5$. According to \cite{GA:09}, we design the sparse array such that the positions of its elements are optimized towards obtaining a uniform sensitivity and desired CRB. 
	Figure \ref{fig:sparse_vs_comp} shows the spatial correlation functions at a specific DOA for the sparse array and an SCF-optimized compressive array (compression from $N = 9$ elements to $M = 5$ receiver chains) that achieves the same CRB (0.113) at the fixed SNR level of \unit[0]{dB}. It can be noted that the sidelobe level of the sparse array is relatively high compared to that of the compressive one (especially the Opt CRB design). Particularly, the mean level of the sidelobes for the optimized compressive array is 0.53 compared to 0.68 of the sparse array.
	
	Figure \ref{fig:sl_crb} shows the resulting trade-off between providing a good CRB and maintaining a low sidelobe level. One can see that the compressive array (with either of the optimization schemes) outperforms the sparse one and gives more degrees of freedom to tune the array design with respect to some desired properties (e.g., targeted CRB or sidelobe level). This figure also confirms once again that the Opt CRB approach outperforms the Opt SCF approach as it gives more control over the sidelobe level for a specific CRB.
	
	The superiority of the compressive arrays over the sparse ones with respect to adaptability is further highlighted in Figure \ref{fig:sl_crb_ad}. It presents the CRB and the sidelobe level of an optimized compressive array as a function of the number of antennas $N$. It is clear that the compressive arrays not only allow to control the CRB and the sidelobe level via an optimization of the combining network, but also by adding more antenna elements while the number of receiver channels is kept fixed. In Figure \ref{fig:sl_crb_ad}, we can see that the CRB can be improved significantly when the number of antenna elements is increased at the price of higher sidelobe levels. However the network can then be re-optimized for the new scenario (e.g., with the OPT CRB approach), aiming at a better suppression of the sidelobes while a certain level of CRB improvement is maintained.

\begin{figure}[t]
    \centering
    \includegraphics[width=0.9\linewidth]{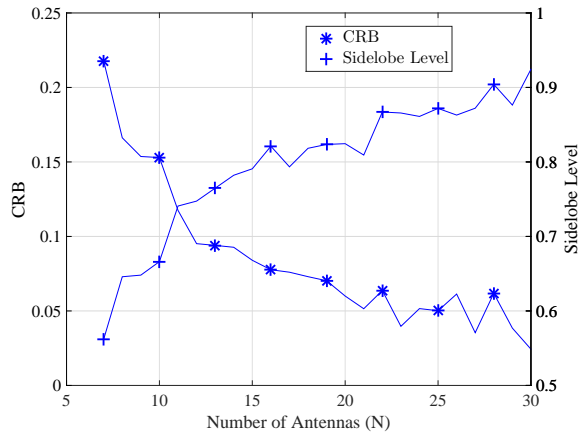}
    \caption{The CRB and the sidelobe level for a compressive array with different number of antenna elements and fixed number of channels}
    \label{fig:sl_crb_ad}
    \vspace*{-0.2cm}
\end{figure}

	\subsection{Performance analysis for multiple sources}

Now we examine the performance of the proposed optimized compressive array in the case of two sources impinging on the array from different DOAs. 
The power ratio between the two sources is set to $\alpha = {|s_2/s_1|^2} = -6$ dB, while their DOAs are $d$ radians apart, i.e., $\theta_1 = \theta$, $\theta_2 = \theta+d$ where $\theta$ scans the whole angular space. The two sources are inphase. 

Similar to Figure \ref{fig:crb_opt_rand}, Figure \ref{fig:ccdf_multi} shows the achievable CRB of the strongest path 
using the compressive arrays with the optimized combining network and the random ones that have the same number of antennas and sampling channels 
for an SNR level (with respect to the strongest source) of \unit[12]{dB}.  The CCDF of the CRB obtained from $5000$ random realizations of $\bm \Phi $ is shown for $d = 0.2$ and $d= 0.4$ radians. As discussed earlier, the Opt SCF and the Opt CRB designs both provide the same CRB and so only the Opt SCF is shown for clarity. The CCDF shows that the CRB of the optimized compressive arrays is again almost in every case lower than that of the random ones. As the sources get closer (e.g., $d=0.2$), the need for the optimized network increases as the probability to achieve acceptable properties (e.g., low CRB) by the random design gets lower.

\begin{figure}[t]
    \centering
    \includegraphics[width=0.9\linewidth]{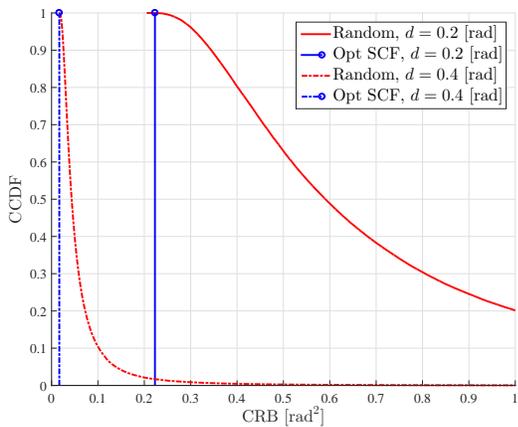}
    \caption{Comparison of the CRBs of the optimized compressive arrays versus the random ones for the case of two impinging sources. The CCDF of the CRB for 5000 random realizations are shown together with that of the optimized kernel.}
    \label{fig:ccdf_multi}
    \vspace*{-0.2cm}
\end{figure}


It has been proposed in Section \ref{sec:multi}, to extend the design based on CRB for the case of multiple sources. Considering the same set-up with two sources, and for a specific DOA of the first source, to do so the sidelobes in the correlation function have to be searched for all possible DOAs of the second source. This leads to a very high computational complexity. The same search strategy has to be performed for all possible DOAs of the first source. Therefore, for simplicity, we fix the second source DOA and perform the optimization similar to that of the single source case. Figure \ref{fig:Pd_SNR_Two_sources} shows the probabilities of false detection $P_{\rm d}$ for two sources. It can be seen that the design based on CRB shows superior performance in terms of lower false detection probability compared to that of the uncompressed $(N = 5)$-element UCA, compressive array with SCF based designed network, and the averaged random ones. Although the SCF based design can not be re-optimized for the multiple source case, it still provides a significantly lower probability of false detection compared to the random arrays and is comparable to a non-compressed \mbox{$(N=5)$-element UCA}.  

\begin{figure}[t!]
    \centering
    \includegraphics[width=0.9\linewidth]{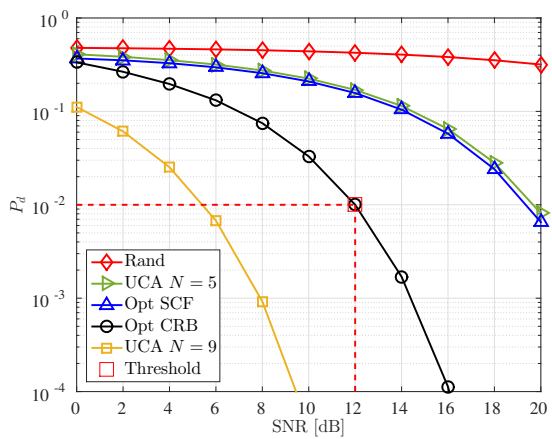}
    \caption{False detection probability of the uncompressed UCAs, compressive arrays, and the random ones with two source signals with  power ratio $\alpha = -6$dB.}
    \label{fig:Pd_SNR_Two_sources}
    \vspace*{-0.2cm}
\end{figure}

\section{Conclusions}\label{sec_concl}
In this paper we consider the design of compressive antenna arrays for direction of arrival (DOA) that aim to provide a larger aperture with a reduced hardware complexity compared to traditional array designs. We present an architecture of such a compressive array and introduce a generic system model that includes different options for the hardware implementation. 
We then focus on the choice of the coefficients in the analog combining network. Instead of choosing them randomly, as advocated by earlier work in this area, we propose a generic design approach for the analog combining network with the goal to obtain an array with certain desired properties, e.g., uniform sensitivity, low cross-correlation, or low variance in the DoA estimates. 
We exemplify the array design via two concrete examples. 
Our numerical simulations demonstrate the superiority of the proposed optimized compressive arrays to compressive arrays with randomly chosen combining kernels, as the latter result in very high sidelobes (which imply a higher probability of false detection) as well as higher CRBs. We also compare our optimized compressive array to a sparse array of the same complexity (i.e., same number of receiver channels $M$) and find that sparse arrays suffer from much higher sidelobes at the same CRB level. Also our proposed compressive array enjoys a high degree of adaptability since the combining weights can be altered to adjust the array to the current requirements, which is impossible for sparse arrays due to their static nature. 

\section*{Acknowledgment}
This work was partially supported by the Deutsche
Forschungsgemeinschaft (DFG) projects CLASS (grant MA 1184/23-1) and CoSMoS (grant GA 2062/2-1) and the Carl-Zeiss Foundation under the postdoctoral scholarship project  ``EMBiCoS''.

\appendices
\section{Proof of Theorem~\ref{thm:closedform} in Section \ref{sec:SCF_design}} \label{app:proof_closedform}

To prove the theorem we use the fact that for a unitary matrix $\ma{U}$ and an arbitrary square matrix $\ma{X}$
		we have $\left\|\ma{X} \cdot \ma{U} \right\|_{\rm F}
		= \left\|\ma{U} \cdot \ma{X} \right\|_{\rm F}
		= \left\|\ma{X} \right\|_{\rm F}.$
		Since $\ma{A}$ satisfies $\ma{A} \cdot \ma{A}^\herm = N \cdot \ma{I}_M$ 
		we can find a matrix $\ma{\bar{A}} \in \compl^{(M-N) \times N}$ such that
		$\ma{V} \stackrel{.}{=} 1/\sqrt{N} \cdot [\ma{A}^\trans, \ma{\bar{A}}^\trans]^\trans
		\in \compl^{N \times N}$ is a unitary matrix.
		Therefore, we have $\ma{V} \cdot \ma{A}^\herm 
		= \begin{bmatrix} \sqrt{N} \cdot \ma{I}_M, \ma{0}_{M \times {N-M}}\end{bmatrix}^\trans$.
		The cost function~\eqref{eqn:cf_main} can then be rewritten as
		\begin{align}
			& \left\|\ma{E}\right\|_{\rm F}^2 = 
			\left\|\ma{V} \cdot \ma{E}\cdot \ma{V}^\herm \right\|_{\rm F}^2 \notag \\
			& = \left\| 
			\begin{bmatrix} \sqrt{N} 
			\ma{I}_M \\ \ma{0}_{N-M \times M}\end{bmatrix}
			\ma{\Phi}^\herm 
			\ma{\Phi}
			\begin{bmatrix} \sqrt{N} 
			\ma{I}_M,  \ma{0}_{M \times N-M}\end{bmatrix}
			- 
			\ma{V} 
			\ma{T} 
			\ma{V}^\herm
			\right\|_{\rm F}^2
			\notag \\
			& = 
			\left\| 
			\begin{bmatrix} 
			N \ma{\Phi}^\herm \cdot \ma{\Phi} & \ma{0}_{M \times N-M} \\
			\ma{0}_{N-M \times M} & \ma{0}_{N-M \times N-M}
			\end{bmatrix}
			-
			N \cdot
			\begin{bmatrix}  \ma{A} \\ \ma{\bar{A}} \end{bmatrix}
			\ma{T} 
			\begin{bmatrix}  \ma{A}^\herm, \ma{\bar{A}}^\herm \end{bmatrix}
			\right\|_{\rm F}^2
			\notag \\
			& = 
			\left\| 
			N \cdot
			\begin{bmatrix} 
			\ma{\Phi}^\herm \cdot \ma{\Phi} - \ma{A} \cdot \ma{T} \cdot\ma{A}^\herm & -\ma{A} \cdot\ma{T} \cdot\ma{\bar{A}}^\herm \\
			- \ma{\bar{A}}\cdot \ma{T} \cdot\ma{A}^\herm & - \ma{\bar{A}} \cdot\ma{T} \cdot\ma{\bar{A}}^\herm
			\end{bmatrix}
			\right\|_{\rm F}^2
			\notag \\
			& =
			N^2 \cdot
			\left\|
			\ma{\Phi}^\herm \cdot \ma{\Phi}
			-
			\ma{S}
			\right\|_{\rm F}^2
			+ {\rm const}, \label{eqn:proof_ir1}
		\end{align}
using the short-hand notation $\ma{S} = \ma{A} \cdot \ma{T} \cdot\ma{A}^\herm$.
Equation~\eqref{eqn:proof_ir1} demonstrates that the optimization problem is equivalent
to finding the best approximation of the matrix $\ma{S}$ by the matrix $\ma{\Phi}^\herm \cdot \ma{\Phi}$.
Since $\ma{\Phi}$ is an $m \times M$ matrix, the rank of the $M \times M$ matrix 
$\ma{\Phi}^\herm \cdot \ma{\Phi}$ is less than or equal to $m<M$. Therefore, \eqref{eqn:proof_ir1}
represents a low-rank approximation problem. According to the Eckart-Young theorem, its 
optimal solution is given by truncating the $M-m$ smallest
eigenvalues of $\ma{S}$.

\section{ Proof of Corollary \ref{cor:roworth} in Section \ref{sec:SCF_design}}
\label{app:proof_corollary}
 The sampled version of \eqref{eqn:manifold_ideal} is given by a scaled identity matrix,
		i.e., $\ma{T} = C\cdot \ma{I}_N$. Since $\ma{A}$ is row-orthogonal it follows that
		$\ma{S} = \bm{A} \cdot \bm{T} \cdot \bm{A}^\herm = C \cdot N \cdot \ma{I}_M$.
		As all eigenvalues of $\ma{S}$ are equal to $C \cdot N$, its eigenvalue decomposition
		can be written as $\ma{S} = \ma{U} \cdot (C \cdot N\cdot \ma{I}_M) \cdot \ma{U}^\herm$,
		where $\ma{U} \in \compl^{M \times M}$ is an arbitrary unitary matrix. Truncating the $M-m$
		``smallest'' eigenvalues, we obtain $\ma{S}_m = C \cdot N \cdot \ma{U}_m \cdot \ma{U}_m^\herm$,
		where $\ma{U}_m \in \compl^{M \times m}$ contains the first $m$ columns of $\ma{U}$.
		Invoking Theorem~\ref{thm:closedform}, we have
		 $\bm{\Phi}_{\rm opt}^\herm \bm{\Phi}_{\rm opt} = C \cdot N \cdot \ma{U}_m \cdot \ma{U}_m^\herm$
		and therefore $\bm{\Phi}_{\rm opt}$ is a scaled version of $\ma{U}_m^\herm$,
		which proves the claim.

\section{Proof of Theorem \ref{thr:prblt} in Section \ref{sec:design_crb}}
\label{app:prooffalsedetecion}

	Denote by $X_q(t) = D(\theta_{0})-D(\theta_{q})$, where $D(\theta)  = \left| \bm{\tilde{a}}^{\rm H}\bm{\tilde{y}}(t) \right |^2 $ and $D(\theta_q)  = \left| \bm{\tilde{a}}_q^{\rm H}\bm{\tilde{y}}(t) \right |^2 $. Then, $X_q(t)$ is a random variable that 
	we can write as
	\begin{align}
	X_q(t)&  =  \left| \bm{\tilde{a}}_0^{\rm H}\bm{\tilde{y}}(t) \right |^2 - \left| \bm{\tilde{a}}_q^{\rm H}\bm{\tilde{y}}(t) \right |^2 \nonumber \\
	& = \bm{\tilde{y}}^{\rm H}(\bm{\tilde{a}}_0\bm{\tilde{a}}_0^{\rm H} - \bm{\tilde{a}}_q\bm{\tilde{a}}_q^{\rm H})\bm{\tilde{y}} = \bm{\tilde{y}}^{\rm H}\bm{D}\bm{\tilde{y}} 
	\label{eqn:qdrc_form}
	\end{align}
	where  $\bm{\tilde{y}} = \bm{\tilde{a}}_0{s}(t) + \tilde{\bm{n}}(t) $ 
 is a complex random vector with  a non-zero mean and a covariance matrix $\bm{R}_{\rm nn}$.  Assuming that $\bm{v}(t)$ and $\bm{w}(t)$ are white with elements that have variance $\sigma_1^2$ and $\sigma_2^2$,
 $\bm{\tilde{y}}$ becomes  a complex-Gaussian random vector with mean equal to $\bm{\tilde{a}}_{\rm 0} s_{\rm 0}$ and covariance $\bm{R}_{\rm nn} = \sigma_1^2  \bm{\Phi} \bm{\Phi}^\herm + {\sigma_2^2}  \bm{I}$. 
This said, $X_{q}(t)$ is  a chi-square random variable that,
	due to the structure\footnote{Note that since the matrix $\bm{D}$ in \eqref{eqn:qdrc_form} results from the subtraction of the outer products of two vectors $\bm{\tilde{a}}_{\rm 0}$ and $\bm{\tilde{a}}_{\rm q}$, it is a complex symmetric matrix that whose maximum rank is $2$ 
	.} of $\bm{D}$ and the fact that $\tilde{\bm{y}}$ is non-zero mean, 
	has a so-called non-central indefinite quadratic form \cite{MP:92}.

	In order to compute the probability $ \textup{Prob}(X_{q}(t) < 0)$, we need to derive the distribution of $X_{q}(t)$. Since the covariance matrix of $\tilde{\bm{y}}$  is colored, it is convenient to express $X_q(t)$ as 
	\begin{equation}
	    	X_q(t)  = \bm{\tilde{y}}_{\rm w}^{\rm H} \bm{B} \bm{\tilde{y}}_{\rm w},
	    	\label{eqn:qdr_form_diag}
	\end{equation}
	where $\bm{\tilde{y}}_{\rm w} = \bm{R}_{\rm nn}^{-1/2}\bm{\tilde{y}}$ contains
	pre-whitened observations whose covariance is an identity matrix and $\bm{B} = \bm{R}_{\rm nn}^{1/2}\bm{D}\bm{R}_{\rm nn}^{1/2}$.
This, way the quadratic form  \eqref{eqn:qdrc_form} is reduced to a diagonal form in independent random variables with unit variance \cite{FA:02}. 
%

	By representing  $\bm{B}$ via its eigen value decomposition (EVD) as $\bm{B} = \bm{U}_{\rm n}\bm{\lambda}\bm{U}_{\rm n}^{\rm H}$ with  $\bm{U}_{\rm n}$ and $\bm{\lambda}$ being the unitary matrix and the diagonal matrix consisting of eigenvalues, respectively, then we can compute the moment generating function (MGF) shown in \cite{GLT:60}, \cite{DR:96} as
	\begin{align}
	 \Psi_{X_{q}}(s) = \frac{ \textup{exp}\Big (\sum_{r=1}^{R} \frac{\mu_{r}^2 \lambda_{r} s}{1-\lambda_{r}s} \Big)}{ \prod_{r=1}^{R} (1- \lambda_{r} s)}.
	 \label{eqn_mfg}
	\end{align}
	Here, $\lambda_r$ is the $r$th eigenvalue of $\bm{B}$,  
	$R$ denotes the rank $\bm{D}$, and $\mu_r$ is the $r$th element of the vector of non-centrality parameters $\bm{\mu}$ defined as
	\begin{align}
	\bm{\mu} = \bm{U}_{\rm n}^{\rm H}\bm{R}_{\rm n}^{-1/2}\bm{r},
	\label{eqn:mu}
	\end{align}
	where $\bm{r}=\bm{\tilde{a}}_{0} s_{0}$.

	The PDF of $X_{q}$ can now be obtained  by computing the inverse Laplace transform of  $ \Psi_{X_{\rm q}}(s)$, whereas the probability of false detection is then obtained by integrating the resulting PDF. %
    In order to compute the integral we apply an iterative approach for numerical integration from \cite{MLP:02} 
    that utilizes the saddle point technique from \cite{CH:86}. 
	In this technique, the integration path is chosen such that is passes through the saddle point of the integrand on the real axis \cite{FA:02}. Since the integrand is convex, a single  saddle point $s =  s_{\rm p}$  exists in $\realof{s}> 0$. Furthermore, it can be easily computed by Newton search method as $s_{\rm p} \leftarrow s_{\rm p} - (\Psi'(s_{\rm p})/\Psi''(s_{\rm p}))$, where $\Psi'(s_{\rm p})$, $\Psi''(s_{\rm p})$ are the first and second order derivatives of  $\Psi(s) = \ln(\Psi_{X_{\rm q}}(s)/s)$ evaluated at the saddle point $s_{\rm p}$ \cite{MLP:02}.

Therefore, using the Gauss-Chebyshev quadrature \cite{MLP:02}, the probability of false detection can be obtained as 
	\begin{align}
	P_{q} \approx \frac{1}{2G} \sum_{g=1}^{G} \hat{\Psi}\Bigg( \frac{(2g-1)\pi}{2G}\Bigg)
	\end{align}
	where $\hat{\Psi}(\tau ) = (1-j \tan(\tau /2))\Psi_{X_q}\big(-s_{\rm p}(1+j \tan(\tau /2))\big)$ and $G$ is the number of steps that determines the accuracy of the integration.

\section{CRB for compressed array}
\label{app:CRLB}

In this section, we present results of the CRB derived in \cite{HLVT:02} for the receiver model shown in Fig. \ref{fig:SystemModel} where the noise vectors $\bm{v}(t)$ and $\bm{w}(t)$ are assumed to be white with covariances $\bm{R}_{\rm vv} = \sigma_1^2 \mathbf{I}_N$ and $\bm{R}_{\rm ww} = \sigma_2^2 \mathbf{I}_M$, respectively. The associated CRB matrix is then found to be
\begin{equation} 
\textup{CRB}(\bm{\Phi}, \bm{\theta}) = \sigma_1^2\big(2\realof{\bm{F} \odot \bm{R}_s^{\rm T}})^{-1},
\label{eqn:CRLB}
\end{equation} 
where $\odot$ denotes Shur (element-wise) matrix product, $\bm{R}_s =  \bm{s}(t)\bm{s}^{\rm H}(t)$ is the signal covariance matrix, and $\bm{F}$ is a matrix that depends on the array beampattern and the combining matrix $\bm{\Phi}$ as 
\begin{equation} \bm{F} = \bm{D}^{\rm H}\bm{\Phi}^{\rm H}\bm{Z}\bm{\Phi}\bm{D}.
\label{eqn:CRB_Fmtx}
\end{equation}
In \eqref{eqn:CRB_Fmtx}, $\bm{Z} = \bm{Q}\Big( \mathbf{I}_N- \bm{\tilde{A}}\Big(\bm{\tilde{A}}^{\rm H}\bm{Q}\bm{\tilde{A}}  \Big)^{-1}\bm{\tilde{A}}^{\rm H} \bm{Q}\Big)$,
\mbox{$\bm{D} = \big[ {\partial \bm{a}(\theta_{0})}/{\partial \theta_{0}}, {\partial \bm{a}(\theta_{1})}/{\partial \theta_{1}},...,{\partial \bm{a}(\theta_{K-1})}/{\partial \theta_{K-1}} \big]$, and}  $\bm{Q} = (\bm{\Phi}\bm{\Phi}^{\rm H} + \beta \mathbf{I}_N)^{-1}$ where $\beta = \frac{\sigma_2^2}{\sigma_1^2}$. Since we consider only a single source for solving the optimization problem in \eqref{eqn:CRB_optm}, $\bm{F}$ reduces to a scalar $F$ and $\bm{R}_s$ to $R_{ss} = \|s(t)\|^2$. Therefore, for a single source \eqref{eqn:CRLB} becomes
\begin{align} \text{CRB}(\bm{\Phi},\theta_{0}) = \sigma_1^2\Big(2F R_{ss} \Big )^{-1}  = \frac{1}{2F\rho},
 \label{single_doa}\end{align}
where $\rho = \frac{R_{ss}}{\sigma_1^2}$ is the input SNR.


\smallskip
\bibliographystyle{unsrt}
\bibliography{refs}

\end{document}